\documentclass[reqno,oneside,12pt]{amsart}


\usepackage{amssymb,graphicx,braket,amsmath,amsfonts,amsthm}
\usepackage{hyperref}
\usepackage[utf8]{inputenc}
\usepackage{braket}
\usepackage{color}
\usepackage{amsaddr}
\usepackage{tikz}
\usetikzlibrary{shapes,arrows,er,positioning}
\usetikzlibrary{shapes.multipart}
\usepackage{stmaryrd}
\usepackage{fancyhdr}
\usepackage{array}
\usepackage{wasysym}
\usetikzlibrary{arrows}
\allowdisplaybreaks

\theoremstyle{plain}
\newtheorem{theorem}{Theorem}[section]
\theoremstyle{plain}
\newtheorem{proposition}[theorem]{Proposition}
\newtheorem*{proposition*}{Proposition}
\newtheorem{remark}[theorem]{Remark}
\newtheorem*{theorem*}{Theorem}

\newtheorem{lemma}[theorem]{Lemma}

\newtheorem*{assumption*}{Assumption}
\newtheorem{corollary}[theorem]{Corollary}

\newtheorem{conjecture}[theorem]{Conjecture}

\theoremstyle{definition} 
\newtheorem{definition}[theorem]{Definition}

\numberwithin{equation}{section}

\def\mainmatter{\def\baselinestretch{1.1}\normalfont}

\usepackage[margin=1in]{geometry}

\makeatletter
\renewcommand{\section}{\@startsection
{section}
{1}
{\z@}
{-\baselineskip}
{0.8\baselineskip}
{\centering\scshape\large}} 

\renewcommand{\subsection}{\@startsection
{subsection}
{2}
{\z@}
{-0.8\baselineskip}
{0.5\baselineskip}
{\normalfont \bf \normalsize}} 

\renewcommand{\subsubsection}{\@startsection
{subsubsection}
{3}
{\z@}
{-0.8\baselineskip}
{0.5\baselineskip}
{\normalfont \it \normalsize}} 
\makeatother

\newcommand{\arXiv}[1]{\href{http://arxiv.org/abs/#1}{\texttt{arXiv\string:\allowbreak#1}}}


\pagestyle{plain}

\begin{document}
\title{Super Topological Recursion and Gaiotto Vectors For Superconformal Blocks}

\author{Kento Osuga$^{1,2}$}
\address{$^1$School of Mathematics and Statistics, University of Sheffield, Hicks Building, Hounsfield Road, Sheffield, S3 7RH, United Kingdom\\
$^2$Graduate School of Mathematical Sciences, University of Tokyo, 3-8-1 Komaba, Meguro, Tokyo, 153-8914, Japan}\email{osuga@ms.u-tokyo.ac.jp}

\begin{abstract}
We investigate a relation between the super topological recursion and Gaiotto vectors for $\mathcal{N}=1$ superconformal blocks. Concretely, we  introduce the notion of  the untwisted and $\mu$-twisted super topological recursion, and construct a dual algebraic description in terms of super Airy structures.  We then show that the partition function of an appropriate super Airy structure coincides with the Gaiotto vector for $\mathcal{N}=1$ superconformal blocks in the Neveu-Schwarz or Ramond sector. Equivalently, the Gaiotto vector can be computed by the untwisted or $\mu$-twisted super topological recursion. This implies that the framework of the super topological recursion--  equivalently super Airy structures -- can be applied to compute the Nekrasov partition function of $\mathcal{N}=2$ pure $U(2)$ supersymmetric gauge theory on $\mathbb{C}^2/\mathbb{Z}_2$ via a conjectural extension of the Alday-Gaiotto-Tachikawa correspondence.

\end{abstract}

\newpage\maketitle
\setcounter{tocdepth}{1}\tableofcontents\mainmatter

\newpage\section{Introduction}\label{sec:intro}

\subsection{Backgrounds}
The Alday-Gaiotto-Tachikawa (AGT) correspondence conjectures an intriguing relation between conformal field theories in two dimensions and $\mathcal{N}=2$ supersymmetric gauge theories in four dimensions. It was conjectured in \cite{AGT} that conformal blocks of the Liouville field theory coincide with the Nekrasov partition function of $\mathcal{N}=2$ supersymmetric gauge theory of gauge group $SU(2)$ in four dimensions \cite{N}. Since then, this correspondence has been intensively studied to verify the details and to extend with great generalities. See \cite{GNY,LF,NY1,NY2,NO,T1,T2} and references therein for reviews and recent developments of the AGT correspondence\footnote{The AGT correspondence can be, in fact, thought of a paricular example of a wider class of duality, the so-called BPS/CFT correspondence reviewed in \cite{N1,N2,N3,N4,N5} which describes a relation between supersymmetric gauge theory, integrable systems, and 2d conformal field theory. Another dscendant of the BPS/CFT correspondence is the so-called Bethe/Gauge correspondence \cite{NS1,NS2,NS3} which indeed predates the AGT correspondence.}.

For $SU(2)$ theories (with or without matters) on $\mathbb{C}^2$, it was conjectured by Gaiotto \cite{G} that the Nekrasov partition function equals to the norm of the Whittaker vector in the corresponding Verma module of the Virasoro algebra, which is often called the ``Gaiotto vector'' in literature. This is a special example of the AGT correspondence because the Gaiotto vector can be realised as the irregular limit of a four-punctured conformal block. The relation between Nekrasov partition functions and Gaiotto vectors has been generalised in several ways, and in particular, it is proven \cite[Theorem 1.4.1]{BFN} that the Nekrasov partition function $Z_{\text{Nek}}^{\mathcal{G}\text{, pure}}$  of pure theory of any simply-laced gauge group $\mathcal{G}$ coincides with the norm of the Gaiotto vector $\ket{G_{\mathfrak{g}}}$ for the corresponding $\mathcal{W}(\mathfrak{g})$-algebra. 

In a completely different context, the notion of ``Airy structures'' was introduced by Kontsevich and Soibelman \cite{KS} (see also \cite{ABCD}) as an algebraic reformulation (and generalisation) of the Chekhov-Eynard-Orantin (CEO) topological recursion \cite{CEO,EO,EO2}. In short, an Airy structure is a collection of differential operators $\{H_i\}_{i\in\mathbb{Z}_{>0}}$ with certain properties, and there exists a unique formal power series solution $Z_{\text{Airy}}$ satisfying differential equations $H_i Z_{\text{Airy}}=0$ from which $Z_{\text{Airy}}$ is recursively determined. Shortly after, Borot et al \cite{HAS} found a systematic recipe to construct an Airy structure from a \emph{twisted} module of the $\mathcal{W}(\mathfrak{gl}_r)$-algebra\footnote{They found Airy structures for $\mathcal{W}(\mathfrak{g})$-algebras of type D or E as well}, and showed an equivalence to the $r$-ramified Bouchard-Eynard topological recursion \cite{BE} on an appropriate spectral curve. The representation of the differential operators $\{H_i\}^{\mathfrak{gl}_r}_{i\in\mathbb{Z}_{>0}}$ is in a one-to-one correspondence with the defining data of the spectral curve, and the partition function $Z^{\mathfrak{gl}_r}_{\text{Airy}}$ encodes the same information as the multilinear differentials $\omega_{g,n}$ obtained by the Bouchard-Eynard topological recursion.

Since $\mathcal{W}$-algebras play an important role in the AGT correspondence, Airy structures, and the topological recursion, is there any relation between them?  This point was recently addressed by Borot, Bouchard, Chidambaram, and Creutzig (BBCC) \cite{BBCC}. They found an Airy structure as an \emph{untwisted} module of a $\mathcal{W}(\mathfrak{g})$-algebra of type A, B, C or D whose partition function $Z_{\text{Airy}}^{\mathfrak{g}}$ is none other than the Gaiotto vector $\ket{G_{\mathfrak{g}}}$ expressed as a power series of formal variables. With an appropriate inner product $(\cdot|\cdot)$ for $Z_{\text{Airy}}^{\mathfrak{g}}$, the AGT correspondence and the results of BBCC imply that 
\begin{equation}
Z_{\text{Nek}}^{\mathcal{G}\text{, pure}} \;\overset{\text{AGT}}{=}\;  \braket{G_{\mathfrak{g}}|G_{\mathfrak{g}}} \;\overset{\text{BBCC}}{=}\; (Z_{\text{Airy}}^{\mathfrak{g}}|Z_{\text{Airy}}^{\mathfrak{g}}), \label{BBCC}
\end{equation}
where the first equality is not proven for type B and C, but it is expected to hold.

However, it is worth emphasizing that the topological recursion dual to the Airy structure of \cite{BBCC}  is \emph{not} the original CEO, or Bouchard-Eynard topological recursion. It is rather called the topological recursion \emph{without branch covers} which was first presented in \cite[Section 10]{ABCD} for $r=2$. Its dual description in terms of Airy structures was first realised in \cite[Section 4.1.3]{SAS} for $r=2$, and BBCC significantly generalised for all $r\geq2$ in \cite{BBCC}. This difference originates from the fact that the Gaiotto vector is constructed in an untwisted module of the $\mathcal{W}(\mathfrak{gl}_r)$-algebra whereas the CEO or Bouchard-Eynard topological recursion is related to twisted modules. Below is a schematic summary of relations between the topological recursion and Airy structures for the $\mathcal{W}(\mathfrak{gl}_r)$-algebra:

\begin{figure}[h]
\begin{tikzpicture}[every text node part/.style={align=center}]
\node (c) {};
\node[entity, below right=0mm and 5mm of c] (TR1) {Topological recursion on  \\ an $r$-ramified spectral curve};
\node[entity, below left=0mm and 5mm of c] (AS1) {Airy structure as a \emph{twisted} \\ module of the $\mathcal{W}(\mathfrak{gl}_r)$-algebra};
\draw[<->, shorten >= 2pt, shorten <= 2pt, draw=black,thick] (TR1) -- node [text width=2cm,midway,above ] {1\;:\;1}  (AS1);
\node[entity, below right=18mm and 5mm of c] (TR2) {Topological recursion \\ without branch covers};
\node[entity, below left=18mm and 5mm of c] (AS2) {Airy structure as an \emph{untwisted} \\ module of the $\mathcal{W}(\mathfrak{gl}_r)$-algebra};
\draw[<->, shorten >= 2pt, shorten <= 2pt, draw=black,thick] (TR2) -- node [text width=2cm,midway,above ] {1\;:\;1}  (AS2);
\end{tikzpicture}
\caption{Correspondences between Airy structures and the topological recursion. The first one is for the topological recursion of CEO \cite{EO2} and Bouchard-Eynard \cite{BE}. The work of BBCC \cite{BBCC} is the second one. 
}\label{fig:1}
\end{figure}
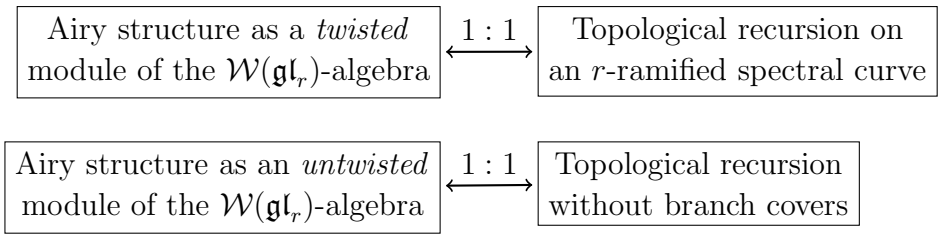

\subsection{Goal of This Paper}
A supersymmetric generalisation of Airy structures, called ``super Airy structures'', was introduced by the author and five more collaborators in \cite{SAS}. Following it, the ``$\mathcal{N}=1$ super topological recursion'' -- a supersymmetric generalisation of the CEO topological recursion -- was recently proposed by Bouchard and the author in \cite{BO2}. Moreover,  \cite{BO2} shows an equivalence between the $\mathcal{N}=1$ super topological recursion on a local super spectral curve and the corresponding super Airy structure as a twisted module of $\mathcal{N}=1$ super Virasoro algebra. This means that a supersymmetric generalisation of the first correspondence in Figure~\ref{fig:1} for $r=2$ has been established by \cite{SAS,BO2}.

Now, motivated by the work of BBCC \cite{BBCC}, one may ask: for $r=2$, can the second correspondence in Figure~\ref{fig:1} be also generalised with $2d$ supersymmetry? And more interestingly, can we incorporate $2d$ supersymmetry into an interesting relation \eqref{BBCC}? The goal of the present paper is to show that the answers are, YES, to both questions. Here we give an outline of the conceptual path to achieve this goal.

 \cite{SAS} presented four different classes of super Airy structures as modules of the $\mathcal{N}=1$ super Virasoro algebra (see Table~\ref{table:1}). The untwisted and the $\mu$-twisted module play a role of building blocks in the present paper, and we suitably generalise discussions in \cite{SAS} to match with the corresponding super topological recursion. We note that the super Airy structures discussed in \cite{BO2} reduce down to the $\rho$-twisted module, which is indeed a natural supersymmetric extension of the CEO topological recursion. 
 
 \begin{table}[h]
 \centering
\begin{tabular}{ | c || c | c | c | } 
\hline
classes & boson & fermion & sector \\ 
\hline
untwisted & $\times$ & $\times$ & NS \\ 
\hline
$\mu$-twisted & $\times$ & $\Circle$ & R \\ 
\hline
$\sigma$-twisted & $\Circle$ & $\times$ & R \\ 
\hline
$\rho$-twisted & $\Circle$ & $\Circle$ & NS \\ 
\hline
\end{tabular}
\vspace{3mm}
\caption{Four classes of super Airy structures as modules of the $\mathcal{N}=1$ super Virasoro algebra \cite{SAS}. $\Circle$ and $\times$ denotes twisted and untwisted respectively, and the resulting super Virasoro algebra is either in the Neveu-Schwarz (NS) sector or Ramond (R) sector.}\label{table:1}
\end{table}

On the other hand, untwisted/$\mu$-twisted super spectral curves and the corresponding super topological recursion are new concepts, hence we will define them in the present paper. These super spectral curves would be supersymmetric analogues of spectral curves \emph{without branch covers}. However, since we do not have a good understanding of global super spectral curves at the moment, we avoid using the terminology ``without branch covers'', but rather adapt the notation of twisting from \cite{SAS}. Note that in our notation, the definition of a local super spectral curve of \cite[Section 2.2]{BO2} would be named as a $\rho$-twisted super spectral curve. We then define untwisted and $\mu$-twisted abstract super loop equations, which play a key role to show an equivalence between the super topological recursion and super Airy structures. These relations are schematically summarised as follows:

\begin{figure}[h]
\centering
\begin{tikzpicture}
\node[entity, align=center] (eq) {untwisted or $\mu$-twisted\\ abstract super loop equations};
\node[below left= 7mm and -2cm of eq, align=left] (P) {Solve geometrically \\ -- residue analysis};
\node[below right= 7mm and -2cm of eq, align=left] (V) {Solve algebraically \\ -- super Virasoro constraints};
\node[entity, align=center, below=7mm of P] (TR) {untwisted or $\mu$-twisted\\ super topological recursion};
\node[entity, align=center, below=7mm of V] (AS) {super Airy structure as \\ untwisted or $\mu$-twisted modules};
\draw[-,draw=black, thick] (eq) to (P);
\draw[-,draw=black, thick] (eq) to (V);
\draw[->,draw=black,thick] (P) to (TR);
\draw[->,draw=black,thick] (V) to (AS);
\end{tikzpicture}
\caption{One of the goals of the present paper is to mathematically formulate the above flowchart. We note that the above flowchart with $\rho$-twisting is none other than the work of \cite{BO2}.}\label{fig:goal}
\end{figure}
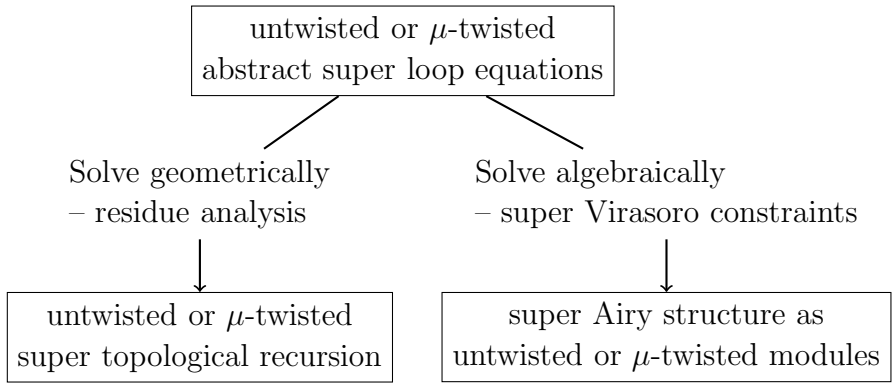

Finally, it was conjectured by \cite{BBM,Ito} that the norm of the Gaiotto vector for $\mathcal{N}=1$ superconformal blocks corresponds to the Nekrasov partition function of gauge group $U(2)$ on $\mathbb{C}^2/\mathbb{Z}_2$. We will explore a relation between the Gaiotto vector in the Neveu-Schwarz/Ramond sector and the partition function of a super Airy structure as an untwisted/$\mu$-twisted module of the $\mathcal{N}=1$ super Virasoro algebra. In particular, we prove the existence of the Gaiotto vectors for superconformal blocks which is assumed in physics literature. From this perspective, the framework of super Airy structures is a bridge connecting the super topological recursion and Gaiotto vectors for superconformal blocks. This stands as a supersymmetric generalisation of \eqref{BBCC}.

\subsection{Organisation}
This paper is organised as follows. In Section~\ref{sec:STR}, we will introduce the notion of an untwisted/$\mu$-twisted super spectral curve as well as abstract super loop equations. We then present an associated recursive formalism that solves the abstract super loop equations, which we call the untwisted/$\mu$-twisted super topological recursion. In Section~\ref{sec:SAS}, we construct the corresponding super Airy structure for each untwisted/$\mu$-twisted super spectral curve, and show a one-to-one relation to the super topological recursion. This equivalence is stated in Theorem~\ref{thm:main} which is the main theorem of the present paper. In Section~\ref{sec:SCB}, we recall basic facts about the Gaiotto vectors for $\mathcal{N}=1$ super conformal blocks, and review a conjectural relation between the Gaiotto vectors and the Nekrasov partition function of pure $U(2)$ theory on $\mathbb{C}^2/\mathbb{Z}_2$. We then prove in Proposition~\ref{prop:main} that the Gaiotto vector in the Neveu-Schwarz or Ramond sector is none other than the partition function of an appropriate super Airy structure,  with a simple change of parameters. We conclude in Section~\ref{sec:Conclusion} with open problems and possible future directions. Details of computations are in part given in Appendix~\ref{sec:App}

\subsubsection*{Acknowledgements 
}The author owes many thanks to Nitin Chidambaram for inspirational discussions and for helpful explanations of the recent results of BBCC \cite{BBCC}.
The author thanks Andrea Brini, Omar Kidwai, and Piotr Su\l kowski for various comments.
The author also acknowledges the Young Researchers Integrability School and Workshop 2020 (YRISW) for a series of introductory lectures at which the author learnt about some topics discussed in the present paper. This work is supported by the Engineering and Physical Sciences Research Council under grant agreement ref. EP/S003657/2, by the TEAM programme of the Foundation for Polish Science co-financed by the European Union under the European Regional Development Fund (POIR.04.04.00-00-5C55/17- 00), and also during the revision by Grant-in-Aid for JSPS Fellows (KAKENHI Grant Number: 22J00102). Finally, the author appreciates various helpful comments by the editors and referees for the improvements of the presentation.

\section{Super Topological Recursion}\label{sec:STR}
In this section, we develop a supersymmetric analogue of the topological recursion without branch covers. We introduce the definitions of an untwisted/$\mu$-twisted super spectral curve and abstract super loop equations, and construct the corresponding recursive formalism. We note that they differ from the definitions of a local super spectral curve and the super topological recursion recently proposed in \cite{BO2} which would be called a $\rho$-twisted super spectral curve and the $\rho$-twisted super topological recursion in our notation --- a natural supersymmetric generalisation of the CEO topological recursion. Parts of the presentation follow \cite{BO2}.

\subsection{Untwisted/$\mu$-twisted Super Spectral Curves}
There are three symplectic vector spaces underlying untwisted/$\mu$-twisted super spectral curves, namely, the vector space for
\begin{enumerate}
    \item bosons $V_{z}^B$,
    \item untwisted fermions $V_{z,\theta}^{NS}$,
    \item $\mu$-twisted fermions $V_{z,\theta}^R$.
\end{enumerate}
Since untwisted/$\mu$-twisted fermions correspond to fermions in the Neveu-Schwarz/Ramond sector respectively, we often denote them by NS/R instead. 

We start with the vector space $V_{z}^B$ for bosons defined in the previous section:
\subsubsection{Vector Space for Bosons}
Let us define $V_{z}^B$ by
\begin{equation}
V_{z}^B:=\{\;\omega\in\mathbb{C}[z^{-1},z\rrbracket dz\;\;|\;\;\underset{z\rightarrow0}{\text{Res}}\,\omega(z)=0\;\},
\end{equation}
equipped with the following symplectic pairing $\Omega^B:V_z^B\times V_z^B\rightarrow\mathbb{C}$:
\begin{equation}
df_1,df_2\in V_z^B,\;\;\;\;\Omega^B(df_1,df_2)=\underset{z\rightarrow0}{\text{Res}} f_1(z)df_2(z).\label{inner product}
\end{equation}
Note that $\Omega^B$ makes sense because no vector in $V_{z}^B$  has residues. We then define a Lagrangian subspace $V_{z}^{B+}=\mathbb{C}\llbracket  z\rrbracket dz\subset V_{z}^B$, with a choice of a basis  $(d\xi_l)_{l>0}$ as
\begin{equation}
d\xi_l(z):=z^{l-1}dz,\;\;\;\;\;l\in\mathbb{Z}_{>0}.
\end{equation}

We consider another Lagrangian subspace $V_{z}^{B-}\subset V_{z}^B$ complementary to $V_z^{B+}$, but this is not unique, and there is a choice of so-called ``bosonic polarization parameters'' $\phi_{lm}=\phi_{ml}\in\mathbb{C}$ for all $k,l\in\mathbb{Z}_{>0}$.  If we denote by $(d\xi_{-l})_{l>0}$ a basis of $V_z^{B-}$, bosonic polarization parameters $\phi_{lm}$ appear in $(d\xi_{-l})_{l>0}$ as:
\begin{equation}
d\xi_{-l}(z)=\frac{dz}{z^{l+1}}+\sum_{m>0}\frac{\phi_{lm}}{l}d\xi_m(z),\;\;\;\;l\in\mathbb{Z}_{>0}.\label{bbasis}
\end{equation}
One can easily check
\begin{equation}
\forall k,l\in\mathbb{Z}_{\neq0},\;\;\;\;\Omega^B(d\xi_k,d\xi_l)=\frac{\delta_{k+l,0}}{k},\label{vecHeis}
\end{equation}
hence $V_z^{B-}$ is indeed complementary to $V_z^{B+}$. One may find \eqref{vecHeis} similar to commutation relations in the Heisenberg algebra.

The choice of polarization can be encoded into a formal symmetry bidifferential $\omega_{0,2|0}$ as follows,
\begin{equation}
\omega_{0,2|0}(z_1,z_2|)=\frac{dz_1 dz_2}{(z_1-z_2)^2}+\sum_{k,l>0}\phi_{kl}\;d\xi_k(z_1) d\xi_l(z_2).
\end{equation}
In the domain $|z_1|>|z_2|$, the basis of $V_z^{B\pm}$ appears in the expansion of $\omega_{0,2|0}(z_1,z_2|)$ as
\begin{equation}
\omega_{0,2|0}(z_1,z_2|)=\sum_{l\geq1}ld\xi_{-l}(z_1)d\xi_l(z_2).\label{02z=0}
\end{equation}

In order to define a spectral curve without branch covers, we further consider a one-dimensional vector space $V_z^{B\,0}$ and chose its basis $d\xi_0$ by
\begin{equation}
d\xi_0(z)=\frac{dz}{z}.
\end{equation}
Note that $d\xi_0\not\in V_z^B$. It does not make sense to apply the definition of the symplectic form \eqref{inner product} to $\tilde{V}_z^B=V_z^B\oplus V^{B\,0}_z$ because $\int d\xi_0=\log z$. We can still formally define a degenerate symplectic form $\tilde \Omega^B:\tilde{V}_z^B\times\tilde{V}_z^B\to\mathbb{C}$ such that if $k,l\neq0$ then $\tilde \Omega^B(d\xi_k,d\xi_l)$ follows \eqref{inner product} and $\tilde \Omega^B(d\xi_0,d\xi_k)=0$ for all $k\in\mathbb{Z}$. Thus, one can interpret that $d\xi_0\in V_z^{B\,0}$ plays the same role as the Heisenberg zero mode.

We now define a spectral curve without branch covers:

\begin{definition}\label{def:curveB}
	A  \emph{spectral curve with one component without branch covers} consists of a degenerate symplectic vector space $\tilde{V}_z^B$ with a maximal isotropic subspace $V_z^{B+}$, and the following data:
\begin{itemize}
\item a choice of ``dilaton shift parameters'' $(\tau_l)_{l\geq-(N-1)}$ with $\tau_{N-1}\neq0$ and $N\in\mathbb{Z}_{>0}$ which can be encoded in a choice of a one-form $\omega_{0,1|0}\in \tilde V^{B}_z$:
\begin{equation}
\omega_{0,1|0}(z)=\sum_{l\geq-(N-1)}\tau_{l}d\xi_{l}(z),
	\end{equation}
\item a choice of ``bosonic polarization parameters'', which can be encoded in a choice of a symmetric bidifferential $\omega_{0,2|0}$:
\begin{equation}
\omega_{0,2|0}(z_1,z_2|)=\frac{dz_1 dz_2}{(z_1-z_2)^2}+\sum_{k,l>0}\phi_{kl}\;d\xi_k(z_1) d\xi_l(z_2).
\end{equation}
\item a choice of ``$D$-terms'' $D_k$ for $1\leq k\leq N$ which can be encoded in a choice of a one-form $\omega_{1,1}\in \tilde V^{B-}_z$:
\begin{equation}
\omega_{1,1|0}(z|)=\sum_{l=1}^N D_k\xi_{-k}.
\end{equation}
\item a choice of ``crosscap parameters'' $(Q_l)_{l\geq-(N-1)}$ with $N\in\mathbb{Z}_{>0}$ which can be encoded in a choice of a one-form $\omega_{\frac12,1|0}\in V^{B}_z$:
\begin{equation}
\omega_{\frac12,1|0}(z|)=\sum_{l\geq-(N-1)}Q_{l}d\xi_{l}(z),
\end{equation}
\end{itemize}
\end{definition}

\begin{remark}
The above definition is similar to those in \cite{ABCD,BBCC} but slightly different. BBCC \cite{BBCC} considers a spectral curve with $(r-1)$ components but only for $N=1$. We show that duality between the (super) topological recursion and (super) Airy structures holds for arbitrary positive integers $N$, though applications to Gaiotto vectors require to set $N=1$.
\end{remark}

\subsubsection{Vector Space for Untwisted Fermions}

We move on to the definition of the vector space $V_{z,\theta}^{NS}$ for untwisted fermions. Let us first summarise important properties of the vector space for bosons:
\begin{itemize}
    \item \;\;$V_z^B$ \,: space of meromophic differentials on $\mathbb{C}_z$,
    \item \;\;$\Omega^B$ \,: anti-symmetric form defined by taking a residue at $z=0$,
    \item $\omega_{0,2|0}$ : symmetric bidifferential which encodes the information about the Lagrangian subspace $V_z^{B-}$ (bosonic polarization parameters $\phi_{kl}$).
\end{itemize}
Recall that spin-statistics is swapped when one replaces fermions in place of bosons. Thus, it is expected from a physical point of view that the vector space $V_{z,\theta}^{NS}$ for untwisted fermions consists of:
\begin{itemize}
    \item \;\;$V_{z,\theta}^{NS}$ \,: space of meromophic \emph{half-order} differentials on $\mathbb{C}^{1|1}_{z|\theta}$,
    \item \;\;$\Omega^{NS}$ \,: \emph{symmetric} form defined by taking a residue at $z=0$,
    \item \;$\omega_{0,2|0}$\; : \emph{anti-symmetric} half-order bidifferential which encodes the information about the Lagrangian subspace $V_{z,\theta}^{NS-}$.
\end{itemize}
For untwisted fermions, i.e., fermions in the Neveu-Schwarz sector, it turns out that such a naive exchange of spin-statistics works for untwisted fermions, and we will construct a vector space that admits these properties.

The vector space $V^{NS}_{z,\theta}$ for untwisted fermions is 
\begin{equation}
	V^{NS}_{z,\theta}:=\{\eta\in\mathbb{C}[z^{-1},z\rrbracket\;\Theta^{NS}(z,\theta)\},
\end{equation}
where
\begin{equation}
	\Theta^{NS}(z,\theta):=\left(\theta+dz\frac{\partial}{\partial\theta}\right),
\end{equation}
and $\theta$ is a Grassmann variable. We equip $V^{NS}_{z,\theta}$ with a pairing $\Omega^F:V^F_{z,\theta}\times V^F_{z,\theta}\rightarrow\mathbb{C}$
\begin{equation}
\Omega^{NS}(\eta_1,\eta_2):=\underset{z\rightarrow0}{\text{Res}}\;\eta_1(z,\theta)\eta_2(z,\theta),\label{NS product}
\end{equation}
We often denote  $\Theta^{NS}(z,\theta)$ and $\Theta^{NS}(z_i,\theta_i)$ by  $\Theta^{NS}_z$ and $\Theta^{NS}_i$ for brevity. We also omit the $\theta$-dependence below. Note that $(\Theta_z^{NS})^2 = dz$, hence, $\Theta^{NS}_z$ can be thought of as a half-order differential, or as a nonzero section of a completely integrable subbundle $\tilde{\mathcal{D}}\subset T^*\mathbb{C}_z$ of rank $0|1$\footnote{In the context of superconformal structures of super Riemann surfaces, a completely nonintegrable subbundle $\mathcal{D}$ of rank $0|1$ normally refers to $\mathcal{D}\subset T\mathbb{C}_z$ (e.g., \cite[Section 2.2]{W}) instead of $T^*\mathbb{C}_z$. Thus, one might find $\tilde{\mathcal{D}}$ a dual description of $\mathcal{D}$.}. 

We decompose $V_z^{NS}$ into two Lagrangian subspaces $V_z^{NS+}$ and $V_z^{NS-}$ as follows. We define $V_z^{NS+}=\{\eta\in \mathbb{C}\llbracket z\rrbracket\,\Theta_z^{NS}\}$, and we fix its basis $(\eta_{l+\frac12})_{l\geq0}$ by
\begin{equation}
\eta_{l+\frac12}(z,\theta):=z^{l}\,\Theta_z,\;\;\;\;\;l\in\mathbb{Z}_{\geq0}.
\end{equation}
Let $V_z^{NS-}$ be another Lagrangian subspace complementary to $V_z^{NS+}$, then its basis $(\eta_{-l-\frac 12})_{l\geq0}$ is given with ``untwisted polarization parameters'' $\psi^{NS}_{lm}=-\psi^{NS}_{ml}\in\mathbb{C}$ as
\begin{equation}
\eta_{-l-\frac 12}(z,\theta):=\left(\frac{1}{z^{l+1}}+\sum_{k\geq0}\psi^{NS}_{lk}z^{k}\right)\,\Theta,\;\;\;\;l\in\mathbb{Z}_{\geq0}.\label{nsbasis}
\end{equation}
Note that we have
\begin{equation}
\forall k,l\in\mathbb{Z},\;\;\;\;\Omega^{NS}(\eta_{k+\frac12},\eta_{l-\frac12})=\delta_{k+l,0},
\end{equation}
which resembles commutation  relations in the Clifford algebra. Analogous to $\omega_{0,2|0}$, we encode untwisted polarization parameters into an antisymmetric bidifferential  $\omega^{NS}_{0,0|2}$ which is defined as
\begin{equation}
\omega^{NS}_{0,0|2}(|z_1,z_2)=-\frac{\Theta^{NS}_1\Theta^{NS}_2}{(z_1-z_2)}+\sum_{k,l\geq0}\psi^{NS}_{kl}\;d\eta_{k+\frac 12}(z_1) d\eta_{l+\frac12}(z_2).
\end{equation}
Then, in the domain $|z_1|<|z_2|$, the basis of $V_z^{NS\pm}$ appears in the expansion of $\omega_{0,0|2}^{NS}(|z_1,z_2)$ as
\begin{equation}
\omega^{NS}_{0,0|2}(|z_1,z_2)=\sum_{l\geq0}\eta_{l+\frac12}(z_1)\eta_{-l-\frac12}(z_2).\label{NS02expansion}
\end{equation}

We now have all ingredients to define an untwisted super spectral curve:

\begin{definition}\label{def:curveNS}
	An  \emph{untwisted super spectral curve $\mathcal{S}_{NS}$ with one component} consists of a $\mathbb{Z}_2$-graded degenerate symplectic vector space $V=\tilde{V}_z^B\oplus V^{NS}_z$ with a maximal isotropic  subspace $V_z^{B+}\oplus V_z^{NS+}$, and the following data:
\begin{itemize}
\item a choice of ``dilaton shift parameters'' $(\tau_l)_{l\geq-(N-1)}$ with $\tau_{N-1}\neq0$ and $N\in\mathbb{Z}_{>0}$ which can be encoded in a choice of a one-form $\omega_{0,1|0}\in \tilde V^{B}_z$:
\begin{equation}
\omega_{0,1|0}(z)=\sum_{l\geq-(N-1)}\tau_{l}d\xi_{l}(z),
	\end{equation}
\item a choice of ``bosonic polarization parameters'', which can be encoded in a choice of a symmetric bidifferential $\omega_{0,2|0}$:
\begin{equation}
\omega_{0,2|0}(z_1,z_2|)=\frac{dz_1 dz_2}{(z_1-z_2)^2}+\sum_{k,l>0}\phi_{kl}\;d\xi_k(z_1) d\xi_l(z_2).
\end{equation}
\item a choice of ``$D$-terms'' $D_k$ for $1\leq k\leq N$ which can be encoded in a choice of a one-form $\omega_{1,1}\in V^{B-}_z$:
\begin{equation}
\omega_{1,1|0}(z|)=\sum_{l=1}^N D_k\xi_{-k}.
\end{equation}
\item a choice of ``crosscap parameters'' $(Q_l)_{l\geq-(N-1)}$ with $N\in\mathbb{Z}_{>0}$ which can be encoded in a choice of a one-form $\omega_{\frac12,1|0}\in \tilde V^{B}_z$:
\begin{equation}
\omega_{\frac12,1|0}(z|)=\sum_{l\geq-(N-1)}Q_{l}d\xi_{l}(z),
\end{equation}
\item a choice of ``untwisted polarization parameters'', which can be encoded in a choice of an antisymmetric bidifferential $\omega^{NS}_{0,2|0}$:
\begin{equation}
\omega^{NS}_{0,0|2}(|z_1,z_2)=-\frac{\Theta^{NS}_1\Theta^{NS}_2}{(z_1-z_2)}+\sum_{k,l\geq0}\psi^{NS}_{kl}\;d\eta_{k+\frac 12}(z_1) d\eta_{l+\frac12}(z_2).
\end{equation}
\end{itemize}
\end{definition}

\subsubsection{Vector Spaces for $\mu$-twisted Fermions}
At last we construct the vector space $V^{R}_{z,\theta}$ for $\mu$-twisted fermions, which already appeared in \cite{BO2}. Although some aspects of $V^{R}_{z,\theta}$ can be similarly defined as we did for $V^{NS}_{z,\theta}$, a critical difference is that there exits the zero mode in the Ramond sector. It turns out that existence of the zero mode results in delicate geometric treatments unlike the naive construction of $V_{z,\theta}^{NS}$ by exchanging spin-statistics from $V_z^B$.

For $\mu$-twisted fermions, the vector space $V^{R}_{z,\theta}$ is given by
\begin{equation}
	V^{R}_{z,\theta}:=\{\eta\in\mathbb{C}[z^{-1},z\rrbracket\;\Theta^R(z,\theta)\},
\end{equation}
where
\begin{equation}
	\Theta^R(z,\theta):=\left(\theta+zdz\frac{\partial}{\partial\theta}\right),
\end{equation}
and $\theta$ is a Grassmann variable. We equip $V^F$ with a pairing $\Omega^R:V^R_{z,\theta}\times V^R_{z,\theta}\rightarrow\mathbb{C}$
\begin{equation}
\Omega^R(\eta_1,\eta_2):=\underset{z\rightarrow0}{\text{Res}}\;\eta_1(z,\theta)\eta_2(z,\theta),\label{R product}
\end{equation}
We again denote  $\Theta^R(z,\theta)$ and $\Theta^R(z_i,\theta_i)$ by  $\Theta^{R}_z$ and $\Theta^{R}_i$ for brevity. Note that $(\Theta_z^R)^2 = z dz$, hence, $\Theta_z^R$ fails to be a nonzero section of the completely integrable subbundle $\tilde{\mathcal{D}}\subset T^*\mathbb{C}_z$ of rank $0|1$. Thus, one may interpret that $V^{R}_{z}$ is associated with a Ramond divisor at the origin in the context of \cite[Section 4.1]{W}. 

As explained in \cite[Section 2.2]{BO2}, $V^{R}_{z}$ are decomposed into three subspaces $V_z^{R+},V_z^{R\,0},V_z^{R-}$. Similar to $V_z^{B+}, V_z^{NS+}$, we define $V_z^{R+}=\{\eta\in \mathbb{C}\llbracket z\rrbracket\,\Theta_z^R\}$, and we fix its basis $(\eta_l)_{l>0}$ as
\begin{equation}
\eta_l(z,\theta):=z^{l-1}\,\Theta^R_z,\;\;\;\;\;l\in\mathbb{Z}_{>0}.
\end{equation}
Next, the zero mode space $ V^{R\,0}$ is one-dimensional whose basis $(\eta_0)$ is given by
\begin{equation}
\eta_0(z,\theta):=\left(\frac{1}{z}+\sum_{k>0}\psi^R_{0k}z^{k-1}\right)\,\Theta_z^R,\label{R basis0}
\end{equation}
where $\psi_{0k}\in\mathbb{C}$. At last, $V_z^{R-}$ is complementary to $V_z^{R+}\oplus V_z^{R\,0}$ whose basis $(\eta_{-l})_{l>0}$ is given by
\begin{equation}
\eta_{-l}(z,\theta):=\left(\frac{1}{z^{l+1}}+\sum_{k\geq0}\psi^R_{lk}z^{k-1}\right)\,\Theta_z^R,\label{R basis}
\end{equation}
where ``$\mu$-twisted polarization parameters'' $\psi^R_{kl}$  satisfy
\begin{equation}
\psi_{00}^R=0,\;\;\;\;\psi^R_{kl}+\psi^R_{lk}+\psi^R_{0k}\psi^R_{0l}=0,\;\;\;\;\forall k,l\in\mathbb{Z}_{\geq0}.\label{R polarization parameters}
\end{equation}
One can easily check that $(\eta_{l})_{l\in\mathbb{Z}}$ satisfy
\begin{equation}
\forall k,l\in\mathbb{Z},\;\;\;\;\Omega^R(\eta_k,\eta_l)=\delta_{k+l,0},
\end{equation}
which again resembles commutation relations in the Clifford algebra. In order to put all $\mu$-twisted polarization parameters into one package, we define an antisymmetric bidifferential $\omega_{0,0|2}^R$ by
\begin{equation}
\omega_{0,0|2}^R(|z_1,z_2):=-\frac12\frac{z_1+z_2}{z_1-z_2}\frac{\Theta^R_1 \Theta^R_2}{z_1z_2}-\sum_{k,l\geq1}\frac{\psi^R_{k-1\;l-1}-\psi^R_{l-1\;k-1}}{1+\delta_{(k-1)(l-1),0}}\frac{\eta_l(z_1)  \eta_k(z_2)}{2z_1z_2}.
\end{equation}
Then, in the domain $|z_1|<|z_2|$, we have
\begin{equation}
\omega^R_{0,0|2}(|z_1,z_2)\rightarrow\sum_{l>0}\eta_{l}(z_1)\eta_{-l}(z_2)+\frac12\eta_0(z_1)\eta_0(z_2).\label{R02expansion}
\end{equation}

A $\mu$-twisted super spectral curve is defined analogously to an untwisted one:

\begin{definition}\label{def:curveR}
	A  \emph{$\mu$-twisted super spectral curve $\mathcal{S}_R$  with one component} consists of a $\mathbb{Z}_2$-graded degenerate symplectic vector space $V=\tilde{V}_z^B\oplus V^{R}_z$, with a maximal isotropic subspace $V_z^{B+}\oplus V_z^{R+}$, and the following data:
\begin{itemize}
\item a choice of ``dilaton shift parameters'' $(\tau_l)_{l\geq-(N-1)}$ with $\tau_{N-1}\neq0$ and $N\in\mathbb{Z}_{>0}$ which can be encoded in a choice of a one-form $\omega_{0,1|0}\in \tilde V^{B}_z$:
\begin{equation}
\omega_{0,1|0}(z)=\sum_{l\geq-(N-1)}\tau_{l}d\xi_{l}(z),
	\end{equation}
\item a choice of ``bosonic polarization parameters'', which can be encoded in a choice of a symmetric bidifferential $\omega_{0,2|0}$:
\begin{equation}
\omega_{0,2|0}(z_1,z_2|)=\frac{dz_1 dz_2}{(z_1-z_2)^2}+\sum_{k,l>0}\phi_{kl}\;d\xi_k(z_1) d\xi_l(z_2).
\end{equation}
\item a choice of ``$D$-terms'' $D_k$ for $1\leq k\leq N$ which can be encoded in a choice of a one-form $\omega_{1,1}\in V^{B-}_z$:
\begin{equation}
\omega_{1,1|0}(z|)=\sum_{l=1}^N D_k\xi_{-k}.
\end{equation}
\item a choice of ``crosscap parameters'' $(Q_l)_{l\geq-(N-1)}$ with $N\in\mathbb{Z}_{>0}$ which can be encoded in a choice of a one-form $\omega_{\frac12,1|0}\in \tilde V^{B}_z$:
\begin{equation}
\omega_{\frac12,1|0}(z|)=\sum_{l\geq-(N-1)}Q_{l}d\xi_{l}(z),
\end{equation}
\item a choice of ``$\mu$-twisted polarization parameters'', which can be encoded in a choice of an antisymmetric bidifferential $\omega^R_{0,2|0}$:
\begin{equation}
\omega^{R}_{0,0|2}(|z_1,z_2)=-\frac12\frac{z_1+z_2}{z_1-z_2}\frac{\Theta^R_1 \Theta^R_2}{z_1z_2}-\sum_{k,l\geq1}\frac{\psi^R_{k-1\;l-1}-\psi^R_{l-1\;k-1}}{1+\delta_{(k-1)(l-1),0}}\frac{\eta_l(z_1)  \eta_k(z_2)}{2z_1z_2}.
\end{equation}
\end{itemize}
\end{definition}

\subsection{Untwisted and $\mu$-twisted Abstract Super Loop Equations}
It turns out that the forms of untwisted/$\mu$-twisted abstract super loop equations and the corresponding super topological recursion are almost universal regardless of the twisting. In addition, all super spectral curves considered in the present paper are with one component. Thus, we often drop ``untwisted/$\mu$-twisted ... with one component'', and we simply call $\mathcal{S}_F$ a super spectral curve where $F\in\{NS,R\}$. For brevity of presentations, we further define a constant $f$ such that:
\begin{align}
\text{for }F=NS,\;\;\;\;f=\frac12,\;\;\;\;
\text{for }F=R,\;\;\;\;\;\;\;f=0\label{fandNSR}
\end{align}

\begin{remark}\label{rem:para}
A super spectral curve $\mathcal{S}_F$ is completely fixed by the following seven parameters:
\begin{equation}
(F,N,\tau_{l},\phi_{kl},\psi_{mn}^F,D_k,Q_l)
\end{equation}
\end{remark}

\subsubsection{Abstract Super Loop Equations}
Given a super spectral curve $\mathcal{S}_F$, we consider an infinite sequence of multilinear differentials $\omega_{g,n|2m}$ for $2g,n,m\in\mathbb{Z}_{\geq0}$ with $2g+n+2m\geq3$ as
\begin{equation}
	\omega_{g,n|2m}\in \left(\bigotimes_{j=1}^nV_{z_j}^{B-}\right)\otimes \left(\bigotimes_{k=1}^{2m} V_{u_k,\theta_k}^{F\,0,-} \right),
\end{equation}
where we should drop $V_{u_k,\theta_k}^{F\,0}$ when $F=NS$. Note that $g$ can be a \emph{half-integer}. We say that $\omega_{g,n|2m}$ ``respect the polarization'', because they are in the subspaces $V_{z_j}^{B-}$ and $V_{u_k,\theta_k}^{F\, 0,-}$ defined by the choice of polarization in the super spectral curve $\mathcal{S}_F$. We impose that the $\omega_{g,n|2m}$ are symmetric under permutations of the first $n$ entries, and anti-symmetric under permutations of the last $2m$ entries. We assume no symmetry under permutations of some of the first $n$ entries with some of the last $2m$ entries. Note that the $\omega_{g,n|2m}$ always have an even number of elements in $\bigotimes V_{u,\theta}^{F\,0,-}$. 

Let us denote by $I,J$ a set of variables $I=(z_1,...)$ and $J=((u_1,\theta_1),..)$. The number of variables in $I$ and $J$ should be read off from multilinear differentials of interests, e.g., for $\omega_{g,n+1|2m}(z_0,I|(u_0,\theta_0),J)$ we have  $|I|=n$ and $|J|=2m-1$. For $2g+n+2m\geq3$ we define:
\begin{align}
\mathcal{Q}_{g,n|2m}^{BF}(I|z,J)=\;&\omega_{g-1,n+1|2m}(z,I|z,J)\nonumber\\
&+\left(\underset{\tilde z\rightarrow0}{{\rm Res}}\,\omega_{\frac12,1|0}(\tilde z)\right)\left(\mathcal{D}_z\cdot\omega_{g-\frac12,n|2m}(I|z,J)+\frac{1-2f}{2}d\xi_0(z)\omega_{g-\frac12,n|2m}(I|z,J)\right)\nonumber\\
&+\sum_{g_1+g_2=g}\sum_{\substack{I_1\cup I_2=I \\ J_1\cup J_2=J}}(-1)^{\rho}\omega_{g_1,n_1+1|2m_1}(z,I_1|J_1)\,\omega_{g_2,n_2|2m_2}(I_2|z,J_2),\label{QFB}
\end{align}
where we dropped the $\theta$-dependence for brevity. $(-1)^{\rho}=1$ if $J_1\cup J_2$ is an even permutation of $J$ and $(-1)^{\rho}=-1$ otherwise. Similarly, for  $2g+n+2m\geq2$ and $(g,n,m)\neq(1,0,0)$, we define
\begin{align}
\mathcal{Q}_{g,n+1|2m}^{BB}(z,I|J)=\;&\omega_{g-1,n+2|2m}(z,z,I|J)+\frac12\left(\underset{\tilde z\rightarrow0}{{\rm Res}}\,\omega_{\frac12,1|0}(\tilde z)\right)\mathcal{D}_z\cdot\omega_{g-\frac12,n+1|2m}(z,I|J)\nonumber\\
&+\sum_{g_1+g_2=g}\sum_{\substack{I_1\cup I_2=I \\ J_1\cup J_2=J}}(-1)^{\rho}\omega_{g_1,n_1+1|2m_1}(z,I_1|J_1)\,\omega_{g_2,n_2+1|2m_2}(z,I_2|J_2),\label{QBB}
\end{align}
\begin{align}
\mathcal{Q}_{g,n+1|2m}^{FF}(z,I|J)=\;&-\frac12\mathcal{D}_z\cdot\omega_{g-1,n|2m+2}(I|z,u,J)\Bigr|_{u=z}\nonumber\\
&+\frac12\sum_{g_1+g_2=g}\sum_{\substack{I_1\cup I_2=I \\ J_1\cup J_2=J}}(-1)^{\rho}\mathcal{D}_z\cdot\omega_{g_1,n_1|2m_1}(I_1|z,J_1) \,\omega_{g_2,n_2|2m_2}(I_2|z,J_2),\label{QFF}
\end{align}
where for $\omega(z)=f(z)dz\in V_{z}^B$ and  $\eta(z,\theta)=g(z)\Theta^F(z,\theta)\in V_{z,\theta}^F$, the derivative operator $\mathcal{D}_z$ is defined to act as
\begin{align}
\mathcal{D}_z\cdot\omega(z)&=df(z) dz\in V_z^B\otimes V_z^B,\\
\mathcal{D}_z\cdot\eta(z,\theta)&=dg(z) \Theta^F(z,\theta)\in V_z^B\otimes V_{z,\theta}^F.
\end{align}
We remark that
\begin{equation}
\underset{\tilde z\rightarrow0}{{\rm Res}}\,\omega_{\frac12,1|0}(\tilde z)=Q_0.
\end{equation}
 In physics, this $Q_0$-dependence can be thought of as a consequence of the so-called ``background charge''.  When $Q_0=0$, \eqref{QFB}-\eqref{QFF} are similar to Eq. (2.36)-(2.38) in \cite{BO2} but they are rather simpler because there is no the involution operator $\sigma$.

Following \cite[Definition 2.8]{BO2}, we define abstract super loop equations:
\begin{definition}\label{def:SLE}
Given an untwisted/$\mu$-twisted super spectral curve $\mathcal{S}_F$, the \emph{untwisted/$\mu$-twisted abstract super loop equations} are the following set of constraints:
\begin{enumerate}
\item \emph{quadratic bosonic loop equations}: for  $2g+n+2m\geq2$ and $(g,n,m)\neq(1,0,0)$
\begin{equation}
	\mathcal{Q}_{g,n+1|2m}^{BB}(z,I|J)+\mathcal{Q}_{g,n+1|2m}^{FF}(z,I|J)\in z^{-N-1} V_z^{B+} \otimes V_z^{B+},
\end{equation}
\item \emph{quadratic fermionic loop equations}: for  $2g+n+2m\geq3$
\begin{equation}
\mathcal{Q}_{g,n|2m}^{BF}(I|z,J)\in z^{-N-1} V_z^{B+}\otimes V_z^{F+}.
\end{equation}
\end{enumerate}
\end{definition}

\subsection{Untwisted/$\mu$-twisted Super Topological Recursion}
We would like to prove that there exists a unique solution of the abstract super loop equations. Similar to the argument in \cite{BO2}, however, existence is easier to prove once we describe them in terms of super Airy structures. Therefore, in this section we assume the existence of a solution, and present a way of computing a unique solution. We call this computational method the ``untwisted/$\mu$-twisted super topological recursion''.

Important objects towards the construction of the super topological recursion are the recursion kernels $K^{BB},K^{BF}$. Given a super spectral curve $\mathcal{S}_F$, we define $K^{BB}$ by
\begin{equation}
K^{BB}(z_0,z)=-\frac{\int^{z}_{0}\omega_{0,2|0}(z_0,\cdot|)}{\omega_{0,1|0}(z|)}.\\
\end{equation}
On the other hand, $K^{BF}$ is defined by
\begin{equation}
K^{BF}(z_0,z)=-\frac{\omega^F_{0,0|2}(|z,z_0)-(f+\frac12)\eta_f(z)\eta_{-f}(z_0)}{2\omega_{0,1|0}(z|)}.
\end{equation}
One should find them similar to the recursion kernels in \cite[Section 3]{BO2}.

With these recursion kernels, we can obtain a unique solution of the abstract super loop equations by the following recursion.

\begin{proposition}\label{prop:STR}
Let  $\tilde{\mathcal{Q}}_{g,n+1|2m}^{BB,FF,BF}$ denote respectively all the terms on the right hand side of \eqref{QFB}-\eqref{QFF} \emph{except} the terms involving $\omega_{0,1|0}$. If there exists a solution to the untwisted/$\mu$-twisted abstract super loop equations that respects the polarization, then it is uniquely constructed recursively by the following formulae:
\begin{itemize}
\item For  $2g+n+2m\geq2$ with $(g,n,m)\neq(1,0,0)$,
\begin{equation}
\omega_{g,n+1|2m}(z_0,I|J)=\underset{z\rightarrow0}{{\rm Res}}\,K^{BB}(z_0,z)\left(\tilde{\mathcal{Q}}_{g,n+1|2m}^{BB}(z,I|J)+\tilde{\mathcal{Q}}_{g,n+1|2m}^{FF}(z,I|J)\right),\label{BR}
\end{equation}
\item For  $2g+n+2m\geq3$,
\begin{equation}
\omega_{g,n|2m+2}(I|u_1,u_2,J)=\hat{\omega}_{g,n|2m+2}(I|u_1,u_2,J)-\eta_{-f}(u_1)\underset{z\rightarrow0}{{\rm Res}}\,\hat{\omega}_{g,n|2m+2}(I|u_2,z,J)\eta_f(z),\label{FR}
\end{equation}
where
\begin{equation}
\hat{\omega}_{g,n|2m+2}(I|u_1,u_2,J)=\underset{z\rightarrow0}{{\rm Res}}\,K^{FB}(u_1,z)\tilde{\mathcal{Q}}_{g,n|2m+2}^{FB}(I|z,u_2,J).\label{FR1}
\end{equation}
\end{itemize}
\end{proposition}

\begin{proof}
The proof closely follows how \cite{BO2} proves the super topological recursion, but our case is simpler because there is no involution operator $\sigma$. Given a super spectral curve $\mathcal{S}_F$, let us assume the existence of a solution of the abstract super loop equations. Since $K^{BB}(z_0,z)$ has a zero of order $N+1$ at $z=0$, the quadratic bosonic loop equations imply that for $2g+n+2m\geq2$ and $(g,n,m)\neq(1,0,0)$,
\begin{equation}
\underset{z\rightarrow0}{{\rm Res}}\,K^{BB}(z_0)\biggl(\mathcal{Q}_{g,n+1|2m}^{BB}(z,I|J)+\mathcal{Q}_{g,n+1|2m}^{FF}(z,I|J)\biggr)=0.
\end{equation}
Let us focus on terms involving $\omega_{0,1|0}$ on the left hand side. They appear in the form:
\begin{align}
\underset{z\rightarrow0}{{\rm Res}}\,K^{BB}(z_0)\biggl(\omega_{0,1|0}(z|)\omega_{g,n+1|2m}(z,I|J)\biggr)&=-\underset{z\rightarrow0}{{\rm Res}}\,\int^{z}_{0}\omega_{0,2|0}(z_0,\cdot|)\omega_{g,n+1|2m}(z,I|J)\nonumber\\
&=-\omega_{g,n+1|2m}(z,I|J)\label{p1},
\end{align}
where we used \eqref{02z=0} in the equality. This proves \eqref{BR}.

Similarly, the quadratic fermionic loop equations imply that  for $2g+n+2m\geq1$, 
\begin{equation}
\underset{z\rightarrow0}{{\rm Res}}\,K^{FB}(u_1,z)\mathcal{Q}_{g,n|2m+2}^{FB}(I|z,u_2,J)=0.\label{p2}
\end{equation}
One can repeat the same procedure as we did in \eqref{p1}. That is, terms involving $\omega_{0,1|0}$ on the left hand side of  \eqref{p2} become
\begin{align}
&-\underset{z\rightarrow0}{{\rm Res}}\left(\omega^F_{0,0|2}(|z,u_1)-\left(f+\frac12\right)\eta_{f}(z)\eta_{-f}(u_1)\right)\omega_{g,n|2m+2}(I|z,u_2,J)\nonumber\\
&=:-\hat{\omega}_{g,n|2m+2}(I|u_1,u_2,,J),
\end{align}
where this should be taken as the definition of $\hat{\omega}_{g,n|2m+2}$. Notice that the only differences between $\hat{\omega}_{g,n|2m+2}(I|u_1,u_2,J)$ and $\omega_{g,n|2m+2}(I|u_1,u_2,J)$ are terms that depend on $\eta_{-f}(u_1)$ thanks to \eqref{NS02expansion} and \eqref{R02expansion}. Since fermionic entries are antisymmetric under their permutations by definition, however, one can indeed supplement this missing $\eta_{-f}(u_1)$-dependence precisely by the second term in \eqref{FR}. It is clear that \eqref{BR} together with \eqref{FR} are recursive for $\omega_{g,n|2m}$ in $2g+n+2m$, hence we have constructed all $\omega_{g,n|2m}$ starting with a super spectral curve, subject to the assumption of the existence of a solution. This completes the proof.

\end{proof}

\begin{remark}
It is highly nontrivial to show that $\omega_{g,n|2m}$  obtained from \eqref{BR} and \eqref{FR} are symmetric under permutations of the first $n$ entries, and anti-symmetric under permutations of the last $2m$ entries.  Also, for $n m\neq0$, there are two formulae to compute $\omega_{g,n|2m}$: from either \eqref{BR} or \eqref{FR}. In other words, having the recursive formulae \eqref{BR} and \eqref{FR} is not sufficient to show existence. We will prove the existence of a solution in Corollary~\ref{coro:main} with the help of super Airy structures.
\end{remark}

\begin{remark}\label{rem:noF0}
It is easy to show that all $\omega_{g,n|2m}(J|K)=0$ for all $2g+n+2m\geq3$ with $g<1$, that is, $g=0$ or $g=\frac12$. This is a general property of the untwisted/$\mu$-twisted super topological recursion. Note that $\omega_{0,n|2m}(J|K)$ can be nonzero if one considers the super topological recursion of \cite{BO2}, i.e., the $\rho$-twisted super topological recursion. 
\end{remark}

\section{Super Airy Structures}\label{sec:SAS}

In this section, we will reformulate the super topological recursion from an algebraic point of view. A key notion is super Airy structures introduced in \cite{SAS}. It has already been shown in \cite{BO2} that the super topological recursion on a $\rho$-\emph{twisted} super spectral curve is dual to a super Airy structure derived from a \emph{$\rho$-twisted} module of the $\mathcal{N}=1$ super Virasoro algebra. Here, our focus is on super Airy structures related to untwisted/$\mu$-twisted modules instead. 

\subsection{Review of Super Airy Structures}

We first review the definition and properties of super Airy structures. This subsection very closely follows \cite[Section 4.1]{BO2}.

Let $U=U_0\oplus U_1\oplus\mathbb{C}^{0|1}$ be a super vector space of dimension $d+1$ over $\mathbb{C}$ (the super vector space could be infinite-dimensional, but for simplicity of presentation we will assume here that it has finite dimension). We define $\{x^i\}_{i\in I}$ to be linear coordinates on $U_0\oplus U_1$ where $I=\{1,...,d\}$ with $x^0$ to be the coordinate of the extra $\mathbb{C}^{0|1}$, and their parity is defined such that $|x^i|=0$ if $x^i\in U_0$, $|x^i|=1$ if $x^i\in U_1$, and $|x^0|=1$. Furthermore, let us denote by
\begin{equation}
\mathcal{D}_{\hbar}(U)=\mathbb{C}\llbracket \hbar^{\frac12},x^0,\hbar\partial_{x^0},\{x^i\}_{i\in I},\{\hbar\partial_{x^i}\}_{i\in I}\rrbracket
\end{equation}
the completed algebra of differential operators acting on $U$ where $\hbar$ is a formal variable. We then introduce a $\mathbb{Z}$-grading by
\begin{equation}
\deg(x^0)=\deg(x^i)=1,\;\;\;\deg(\hbar\partial_{x^0})=\deg(\hbar\partial_{x^i})=1,\;\;\;\;\deg(\hbar)=2.
\end{equation}

\begin{definition}[{\cite[Definition 2.3]{SAS}}]\label{def:SAS}
A \emph{super Airy structure} is a set of differential operators $\{H_i\}_{i\in I}\in\mathcal{D}_{\hbar}(U)$ such that:
\begin{enumerate}
\item for each $i\in I$, $H_i$ is of the form
\begin{equation} 
H_i=\hbar\partial_{x^i}-P_i,\label{form}
\end{equation}
where $P_i\in\mathcal{D}_{\hbar}(U)$ has degree greater than 1 with $|P_i|=|x^i|$,
\item there exists $f_{ij}^k\in\mathcal{D}_{\hbar}(U)$ such that
\begin{equation}
[H_i,H_j]_s=\hbar\sum_{k\in I} f_{ij}^k \,H_k,\label{left}
\end{equation}
where $[\cdot,\cdot]_s$ is a super-commutator.
\end{enumerate}
\end{definition}

It is crucial that the $x^0$-dependence appears only in the $\{P_i\}_{i\in I}$, but not in the degree 1 term (there is no $H_0$). In other words, the dimension of the super vector space $U$ is one more than the number of differential operators $\{H_i\}_{i\in I}$. We thus call $x^0$ the \emph{extra variable}. We note that there is no notion of extra variables in the standard, nonsupersymmetric formalism of Airy structures.

\begin{theorem}[{\cite[Theorem 2.10]{SAS}}]\label{thm:SAS1}
Given a super Airy structure  $\{H_i\}_{i\in I}$, there exists a unique formal power series $\hbar \mathcal{F}(x)\in\mathbb{C}\llbracket \hbar^{\frac12},x^0,(x^i)_{i\in I}\rrbracket$ (up to addition of terms in $\mathbb{C}\llbracket \hbar\rrbracket$) such that:
\begin{enumerate}
\item $\hbar \mathcal{F}(x)$ has no term of degree 2 or less,
\item every term in $\hbar \mathcal{F}(x)$ has even parity,
\item it satisfies $H_i\,e^{\mathcal{F}}=0$.
\end{enumerate}
\end{theorem}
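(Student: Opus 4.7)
The plan is to follow the standard strategy for Airy structures adapted to the super setting. First I would rewrite the equation $H_i e^{\mathcal{F}} = 0$ in the conjugated form $e^{-\mathcal{F}} H_i e^{\mathcal{F}} = 0$, and use $H_i = \hbar \partial_{x^i} - P_i$ to recast it as
\[
\hbar \partial_{x^i} \mathcal{F} \;=\; e^{-\mathcal{F}} P_i e^{\mathcal{F}},
\]
where the right-hand side is a universal polynomial in partial derivatives of $\mathcal{F}$ obtained by the usual exponential expansion of a differential operator acting on $e^{\mathcal{F}}$. This rewrites the collection of quantum constraints as an explicit fixed-point system for the coefficients of $\mathcal{F}$.

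Next I would expand everything with respect to the $\mathbb{Z}$-grading ($\deg x^i = 1$, $\deg \hbar \partial_{x^i} = 1$, $\deg \hbar = 2$), writing $\mathcal{F} = \sum_{d \geq 1} \mathcal{F}_d$ by total degree. Because $P_i$ has degree at least $2$ and every derivative lowers the degree of the resulting expression by one, the component of $e^{-\mathcal{F}} P_i e^{\mathcal{F}}$ of total degree $d$ involves only coefficients of $\mathcal{F}$ of degree strictly less than $d$, while the left-hand side $\hbar \partial_{x^i} \mathcal{F}$ is linear in $\mathcal{F}$ and sends degree $d-1$ to degree $d$. This sets up a clean induction on $d$: assuming $\mathcal{F}_{<d}$ has been determined, reading off the coefficient of $x^{i_1}\cdots x^{i_{n-1}} \hbar^g$ in the degree-$d$ equation coming from $H_i$ fixes the coefficient of $x^i x^{i_1}\cdots x^{i_{n-1}} \hbar^g$ in $\mathcal{F}_d$.

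Two nontrivial issues then need to be addressed. The parity constraint is the easier one: given $|P_i| = |x^i|$ and $|x^0|=1$, a direct check shows that the recursion preserves the ansatz that every monomial of $\mathcal{F}$ has even total parity, and that the extra variable $x^0$ (which satisfies $(x^0)^2 = 0$) appears only passively through its occurrences in the $P_i$, so that the absence of an ``$H_0$'' equation causes no problem. The harder issue is compatibility: the same coefficient of $\mathcal{F}$ can in principle be read off from several different equations, corresponding to peeling off any of the $x^i$ factors of a given monomial, and for the inductive construction to be well defined these outputs must coincide. This is exactly the role of the second axiom in Definition~\ref{def:SAS}. Indeed, if $H_k e^{\mathcal{F}}$ vanishes through degree $d-1$ for every $k$, then the super-commutator identity $[H_i, H_j]_s = \hbar \sum_k f_{ij}^k H_k$ applied to $e^{\mathcal{F}}$ forces $(H_i H_j - (-1)^{|i||j|} H_j H_i) e^{\mathcal{F}}$ to vanish through degree $d$, and this coincidence of the two orders of application is precisely what reconciles the two prescriptions for the degree-$d$ coefficient of $\mathcal{F}$.

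I expect the compatibility step to be the main obstacle. One has to track carefully how the super-commutator relation propagates the vanishing of $H_k e^{\mathcal{F}}$ from order $d-1$ to order $d$, handle signs coming from the $\mathbb{Z}_2$-grading consistently, and ensure that the coefficients $f_{ij}^k \in \mathcal{D}_\hbar(U)$ respect the inductive degree bound when acting on $e^{\mathcal{F}}$ (essentially because $\hbar$ multiplies the right-hand side of the bracket relation, each $f_{ij}^k H_k$ contributes at least the required degree shift). Once this is verified, uniqueness is immediate from the triangular form of the recursion, and existence follows because at each step the potential ambiguity has been shown to vanish, so the construction proceeds without obstruction and produces the desired $\mathcal{F}$, determined up to the advertised $\mathbb{C}\llbracket \hbar \rrbracket$-valued integration constant.
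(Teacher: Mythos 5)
Your outline is essentially the argument this paper relies on: Theorem~\ref{thm:SAS1} is imported from \cite[Theorem 2.10]{SAS} rather than proved here, and that proof is exactly your induction on the grading (equivalently on $\chi=2g+n$), with uniqueness coming from the triangular structure of $\hbar\partial_{x^i}\mathcal{F}=e^{-\mathcal{F}}P_ie^{\mathcal{F}}$, existence from the super-commutator (left-ideal) condition resolving the overdetermination, and the parity requirement eliminating the pure-$x^0$ terms so that only the $\mathbb{C}\llbracket\hbar\rrbracket$ ambiguity remains. The one point specific to the present paper that your sketch does not mention is the footnoted extension from $\mathbb{C}\llbracket\hbar\rrbracket$ to $\mathbb{C}\llbracket\hbar^{\frac12}\rrbracket$, which is harmless precisely because $\deg\hbar^{\frac12}=1$ keeps the induction parameter an integer.
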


$Z := e^{\mathcal{F}}$ is called the \emph{partition function} and $\mathcal{F}$ \emph{the free energy}. Note that $e^{\mathcal{F}}$ is not a power series in $\hbar$, and so one should replace condition (3) by $e^{-\mathcal{F}}\,H_i\,e^{\mathcal{F}}=0$, which gives a power series in $\hbar$. However, as is standard, we write $H_i\,e^{\mathcal{F}}=0$ for brevity. 

Explicitly, $\mathcal{F}$ can be expanded as follows
\begin{equation}
\mathcal{F}=\sum_{g\in\frac12\mathbb{Z}_{\geq0},\;n\in\mathbb{Z}_{\geq0}}^{2g+n>2}\frac{\hbar^{g-1}}{n!}\sum_{i_1,...,i_n\in \{0,I\}}F_{g,n}(i_1,...,i_n)\prod_{k=1}^nx^{i_k},
\end{equation}
where the restriction on the sum ($2g + n >2$) comes from condition (1) in Theorem~\ref{thm:SAS1}. 
$F_{g,n}(i_1,...,i_n)$ is $\mathbb{Z}_2$-symmetric under permutations of indices\footnote{The original definition in \cite{SAS} considers only power series of $\hbar$ rather than $\hbar^{\frac12}$. However, there is no issue with extending the algebra with $\hbar^{\frac12}$ because $\deg\hbar^{\frac12}=1$. In particular, Theorem~\ref{thm:SAS1} stands because their proof considers induction with respect to $\chi=2g+n$, and $\chi$ remains integers even with $\hbar^{\frac12}$.}.

\subsection{Super Virasoro Type}
We now take both the bosonic and fermionic vector spaces $U_0,U_1$ to be countably infinite dimensional. Also, we explicitly distinguish bosonic and fermionic coordinates, namely, we denote by $\{x^1,x^2,...\}$ and $\{\theta^1,\theta^2,...\}$ the coordinates on $U_0$ and $U_1$ respectively and $\theta^0\in\mathbb{C}^{0|1}$ is treated as the extra variable. In particular, all $\{\theta^0,\theta^1,\theta^2,...\}$ are Grassmann variables. 

We then define differential operators $\{J_a\}_{a\in\mathbb{Z}}$ as
 \begin{equation}
\forall a\in\mathbb{Z}_{>0},\;\;\;\;J_{a}=\hbar\frac{\partial}{\partial x^a},\;\;\;\;J_0=\tau_0+\hbar^{\frac12}Q_0,\;\;\;\;J_{-a}=ax^a.\label{Heis}
\end{equation}
where $\tau_0,Q_0\in\mathbb{C}$ and this is different from $J_0$ in \cite[Section 4.2]{BO2}. For the Neveu-Schwarz sector (equiv. untwisted fermion), we define  Grassmann differential operators $\{\Gamma_r\}_{r\in\mathbb{Z}+\frac12}$ by
\begin{equation}
\forall i\in\mathbb{Z}_{\geq0},\;\;\;\;\Gamma_{i+\frac12}=\hbar\frac{\partial}{\partial\theta^{i}},i\;\;\;\;\Gamma_{-i-\frac12}=\theta^{i},\label{Cliff}
\end{equation}
On the other hand, for the Ramond sector (equiv. $\mu$-twisted fermion), we define $\{\Gamma_r\}_{r\in\mathbb{Z}}$ by
\begin{equation}
\forall r\in\mathbb{Z}_{>0},\;\;\;\;\Gamma_{r}=\hbar\frac{\partial}{\partial\theta^{r}},\;\;\;\;\Gamma_0=\frac{\theta^0}{2}+\hbar\frac{\partial}{\partial\theta^0},\;\;\;\;\Gamma_{-r}=\theta^{r},
\end{equation}
It is straightforward to see that $J_a$ are a basis for the Heisenberg algebra, while the $\Gamma_a$ are a basis for the Clifford algebra:
\begin{equation}
[J_a,J_b]=a\,\hbar\,\delta_{a+b,0},\;\;\;\;[J_a,\Gamma_{b}]=0,\;\;\;\;\{\Gamma_{a},\Gamma_{b}\}=\hbar\,\delta_{a+b,0}.
\end{equation}
Using these differential operators, we define super Virasoro differential operators with background charge $Q_0$ (where $: \cdots :$ denotes normal ordering) as
\begin{align}
n\in\mathbb{Z},\;\;\;\;L_{n}&=\frac12\sum_{j\in\mathbb{Z}}: J_{-j}J_{n+j} : + \frac12\sum_{s\in\mathbb{Z}+f}(\frac n2+s) : \Gamma_{-s}\Gamma_{s+n} :\nonumber\\
&+\delta_{n,0}\delta_{F,R}\frac{\hbar}{16}-\frac{n+1}{2}\hbar^{\frac12}Q_0J_n,\label{L1}\\
r\in\mathbb{Z}+f,\;\;\;\;G_{r}&=\sum_{j\in\mathbb{Z}} :J_{-j}\Gamma_{j+r} :-\left(r+\frac12\right)\hbar^{\frac12}Q_0\Gamma_r,\label{G1}
\end{align}
where recall from \eqref{fandNSR} that $f=1/2$ in the Neveu-Schwarz sector and $f=0$ in the Ramond sector. Note that they generate the $\mathcal{N}=1$ super Virasoro algebra with $\hbar$ inserted \footnote{$\hbar$ is inserted in order to meet the criteria of super Airy structures \eqref{left}. Also, the central charge in \cite{BF,Ito} uses a different normalisation.}:
\begin{align}
[L_m,L_n]=&\hbar(m+n)L_{m+n}+\hbar\frac{c}{12}(m^3-m)\delta_{m+n,0},\\
[L_m,G_r]=&\hbar\left(\frac{m}{2}-r\right)G_{m+r},\\
\{G_r,G_s\}=&2\hbar L_{r+s}+\hbar \frac{c}{3}\left(r^2-\frac14\right)\delta_{r+s,0},\\
c=&\hbar\left(\frac{3}{2}-3Q_0^2\right).
\end{align}
In terms of representations of vertex operator algebras, these differential operators are in untwisted or $\mu$-twisted modules of the $\mathcal{N}=1$ super Virasoro algebra \cite{SAS}.

We now construct a super Airy structure $\tilde{\mathcal{S}}_F$ dual to a super spectral curve $\mathcal{S}_F$. Recall from Remark~\ref{rem:para} that a super spectral curve $\mathcal{S}_F$ is completely determined when one fixes all the parameters $(F,N,\tau_{l},\phi_{kl},\psi_{mn}^F,D_k,Q_l)$. Let us take the same parameters, and consider a set $\tilde{\mathcal{S}}_F=\{H_{i} ,F_{i}\}_{ i\in\mathbb{Z}_{\geq1}}$ of the following differential operators:
\begin{align}
\forall i\in\mathbb{Z}_{\geq1},\;\;\;\;H_{i} =& \Phi_N L_{N+i-1} \Phi_N^{-1}+\hbar \tilde{D}_i,\label{Hi}\nonumber\\
&-\frac12\sum_{k,l=1}^{N-1}\delta_{k+l,N+i-1}(\tau_{-k}+\hbar^{\frac12}Q_{-k})(\tau_{-l}+\hbar^{\frac12}Q_{-l}),\\
F_{i} =& \Phi_N G_{N+i+f-1} \Phi_N^{-1},
\end{align}
where
\begin{equation}
\tilde D_i=\sum_{k=0}^{N-i}\tau_{k-(N-1)}D_{i+k},\;\;\;\;D_{i>N}=0,\label{Di}
\end{equation}
\begin{align}
\Phi_N:=&\exp\left(-\frac{1}{\hbar}\sum_{l=1}^{N-1}\frac{\tau_{-l}+\hbar^{\frac12}Q_{-l}}{l}J_{-l}\right)\nonumber\\
&\times\exp\left(\frac{1}{\hbar}\left(\sum_{l>0}\frac{\tau_l+\hbar^{\frac12}Q_l}{l}J_l+\sum_{l,k>0}\frac{\phi_{kl}}{2kl}J_kJ_l+\sum_{k,l\geq0}\frac{\psi^F_{kl}}{2}\Gamma_{k+f}\Gamma_{l+f}\right)\right).\label{Phi}
\end{align}
Note that the order of operators in $\Phi_N$ is important. That is, the conjugation with negative modes $(J_{-l})_{l\in\mathbb{Z}_{>0}}$ should act after all positive modes $(J_{l})_{l\in\mathbb{Z}_{>0}}$. When $N=1$ and $Q_l=0$ for all $l\in\mathbb{Z}$, $\Phi_1$ is identical to $\Phi$ in \cite[Section 4.2]{BO2}. However, we need to additionally consider conjugation by finitely many negative modes $J_{-l}$ in order to match with the super topological recursion when $N>1$.

\begin{proposition}\label{prop:SAS}
 $\tilde{\mathcal{S}}_F$ forms a super Airy structure.
\end{proposition}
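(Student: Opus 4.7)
My plan is to verify the two conditions in Definition~\ref{def:SAS} for $\tilde{\mathcal{S}}_F = \{H_i, F_i\}_{i\in\mathbb{Z}_{\geq 1}}$, organised in three steps.

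\emph{Step 1 (form condition).} Using the Hadamard identity together with the Heisenberg and Clifford brackets $[J_a,J_b] = a\hbar\,\delta_{a+b,0}$ and $\{\Gamma_r,\Gamma_s\} = \hbar\,\delta_{r+s,0}$, the adjoint action of $\Phi_N$ produces affine shifts of the modes. Explicitly, one obtains $\Phi_N J_{-k} \Phi_N^{-1} = J_{-k} + \tau_k + \hbar^{1/2}Q_k + \sum_{l>0}(\phi_{kl}/k)J_l$ for $k>0$, $\Phi_N J_m \Phi_N^{-1} = J_m + \tau_{-m} + \hbar^{1/2}Q_{-m}$ for $1 \leq m \leq N-1$, and $\Phi_N J_m \Phi_N^{-1} = J_m$ for $m \geq N$; the $\Gamma$-modes are shifted analogously via $\psi^F_{kl}$ (with the Ramond zero mode $\Gamma_0$ producing the extra variable $\theta^0$). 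Plugging these shifts into \eqref{L1}--\eqref{G1} and extracting the degree-$1$ component of $\Phi_N L_{N+i-1}\Phi_N^{-1}$ and $\Phi_N G_{N+i+f-1}\Phi_N^{-1}$, the leading term reduces (after the normalisation $\tau_{N-1}=1$ encoded by the condition $\tau_{N-1}\neq 0$) to $\hbar\partial_{x^i}$ and the corresponding fermionic derivative. The quadratic subtraction in \eqref{Hi} is designed to cancel the degree-$0$ constant generated when two positive-mode factors in a normal-ordered product $:J_{-k}J_{N+i-1+k}:$ are simultaneously shifted to constants, while $\hbar\tilde D_i$ contributes a pure c-number of degree~$2$ absorbed into $P_i$. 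The parity match $|P_i|=|x^i|$ is manifest because $L_n$ is bosonic and $G_r$ is fermionic. Together this establishes $H_i = \hbar\partial_{x^i} - P_i$ and $F_i = \hbar\partial_{\theta^{\ast}} - P_i^F$ with $\deg P > 1$.

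\emph{Step 2 (closure condition).} Since $\Phi_N$ is invertible, $\Phi_N(\cdot)\Phi_N^{-1}$ is an algebra automorphism, so the $\hbar$-deformed super Virasoro relations descend directly to the conjugated generators. Because $H_i$ and $F_i$ differ from $\Phi_N L_{N+i-1}\Phi_N^{-1}$ and $\Phi_N G_{N+i+f-1}\Phi_N^{-1}$ only by c-number corrections, the super-commutators reduce to
\begin{align*}
[H_i,H_j]_s &= \hbar(i-j)\,\Phi_N L_{2N+i+j-2}\Phi_N^{-1},\\
[H_i,F_j]_s &= \hbar\bigl(\tfrac{N+i-1}{2}-(N+j+f-1)\bigr)\,\Phi_N G_{2N+i+j+f-2}\Phi_N^{-1},\\
\{F_i,F_j\}_s &= 2\hbar\,\Phi_N L_{2N+i+j+2f-2}\Phi_N^{-1},
\end{align*}
and the central terms in the super Virasoro algebra do not contribute because, for $i,j\geq 1$ and $N\geq 1$, all index sums above are strictly positive. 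The resulting operators are proportional, via the definitions of $H_k$ and $F_k$, to $H_{N+i+j-1}$, $F_{N+i+j-1}$ and $H_{N+i+j+2f-1}$.

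\emph{Step 3 (vanishing of c-number corrections).} To conclude, I would verify that the residual c-number terms (from $\hbar\tilde D_k$ and the quadratic subtraction) indeed vanish in the relevant ranges. For $i,j\geq 1$, the shifted indices satisfy $N+i+j-1 > N+2f$ and $N+i+j+2f-1 > N+2f$, so \eqref{Di} forces $\tilde D_{N+i+j-1}=\tilde D_{N+i+j+2f-1}=0$. Likewise the quadratic subtraction involves $k,l\in[1,N-1]$ with $k+l$ equal to $2N+i+j-2$ or $2N+i+j+2f-2$, both of which exceed $2N-2$, so the constraint has no solutions and the subtraction vanishes identically. Therefore the super-commutators close on $\{H_k,F_k\}$ with scalar structure constants $f_{ij}^k\in\mathbb{C}$.

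\emph{Main obstacle.} The most delicate part of the argument is Step~1: carefully tracking the affine mode shifts through the normal-ordered quadratic expressions \eqref{L1}--\eqref{G1}, monitoring the parity (especially the role of the extra variable $\theta^0$ in the Ramond sector), and confirming that no degree-$0$ or degree-$1$ residues remain in $P_i$ or $P_i^F$. This computation parallels \cite[Section 4.2]{BO2} for the $\rho$-twisted case, with simplifications because no involution operator $\sigma$ appears here and only constant shifts plus the background charge $Q_0$ need to be tracked.
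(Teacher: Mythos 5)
Your Steps 2 and 3 (closure of the super-commutators, vanishing of the central terms, and the check that the constant corrections $\hbar\tilde D_k$ and the quadratic subtraction drop out of the indices produced on the right-hand side) are correct and match the paper's argument; the only slip there is that for $F=NS$ and $i=j=1$ the index $N+i+j-1$ equals $N+2f$ rather than exceeding it, which is harmless only because the coefficient $(i-j)$ vanishes in that case.

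The genuine gap is in Step 1, i.e.\ the verification of the form condition \eqref{form}. After conjugation by $\Phi_N$, the degree-one part of $H_i$ is \emph{not} a single derivative: using \eqref{PhiJ}--\eqref{PhiGamma} one finds it equals $\sum_{k\geq0}C_kJ_{i+k}+\hbar^{\frac12}\sum_{k\geq0}C'_kJ_{i+k}-\tfrac{N+i}{2}Q_0\hbar^{\frac12}J_{i+N-1}$, with $C_k,C'_k$ as in \eqref{C0}--\eqref{C00}, and similarly $F_i$ contains $\sum_{k\geq0}C_k\Gamma_{i+k+f}+\cdots$. The subleading dilaton shifts $\tau_l$, the crosscap parameters $Q_l$ and the polarization parameters $\phi_{kl},\psi_{kl}$ all feed into these tails, so even after rescaling by $C_0$ the operators contain infinitely many degree-one terms $J_{i+k}$, $k\geq1$, which by definition cannot be absorbed into $P_i$ (degree $>1$). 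Hence your conclusion ``$H_i=\hbar\partial_{x^i}-P_i$ with $\deg P_i>1$'' fails for generic parameters; it would only hold when all subleading $\tau_l$, $Q_l$, $\phi_{kl}$, $\psi_{kl}$ vanish, whereas the proposition is stated for arbitrary data. The missing idea, which is the computational core of the paper's proof, is that the degree-one coefficient matrix is upper triangular with nonzero diagonal entry $C_0$ (this is where $\tau_{-(N-1)}\neq0$ enters), so an invertible upper-triangular \emph{linear transformation of the family} $\{H_i,F_i\}$ produces operators $\bar H_i=J_i+\hbar D_i+\text{deg.\ 2 terms}$, $\bar F_i=\Gamma_i+\text{deg.\ 2 terms}$ of the required form, and such a linear change of generators preserves condition \eqref{left}. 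This step is also what justifies the precise combination $\tilde D_i$ in \eqref{Di} (ensuring a single $D_i$ survives in each $\bar H_i$), a point your proposal does not address.
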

\begin{proof}
Since  $\{L_n,G_r\}$ generates an $\mathcal{N}=1$ super Virasoro subalgebra, and since the $\Phi_N$-action is just conjugate, it is easy to see that $\tilde{\mathcal{S}}_F$ satisfies the second condition in Definition~\ref{def:SAS}. Importantly, since $L_N,...,L_{2N-1+2f}$ never appear on the right hand sides of super Virasoro commutation relations, constant terms can freely be added to, or subtracted from $L_N,...,L_{2N-1+2f}$ without changing commutation relations \footnote{This point is already addressed in \cite[Section 5.1]{SAS} when $\tau_l=\delta_{l,-N+1}$ and $Q_l=\phi_{kl}=\psi_{kl}=0$.}. The second line in \eqref{Hi} is there to remove unwanted constant terms that appear in $ \Phi_N L_{N+i-1} \Phi_N^{-1}$.

What remains to be shown is that there exists a linear transformation that brings  $H_{i} ,F_{i}$ to the form of \eqref{form}, i.e. the first condition in Definition~\ref{def:SAS}. This is proven by explicit evaluations of the conjugation of $L_{N+i-1}$ and $G_{N+i+f-1}$ by $\Phi_N$ which is purely computational. Thus, we give the rest of the computations in Appendix~\ref{app:SAS}.

\end{proof}

It is worth emphasizing that the defining data of a super spectral curve is in a one-to-one correspondence with that of a super Airy structure of Proposition~\ref{prop:SAS}. Thanks to Theorem~\ref{thm:SAS1}, there exists a unique partition function $Z_F$ and free energy $\mathcal{F}_F= \log Z_F$ in the form:
\begin{equation}
\mathcal{F}_F=\sum_{g,n,m\geq0}^{2g+n+2m>2}\frac{\hbar^{g-1}}{n!(2m)!}\sum_{\substack{i_1,...,i_n>1\\j_1,...,j_{2m}\geq0}}F_{g,n|2m}(i_1,...,i_n|j_1,...,j_{2m})\prod_{k=1}^nx^{i_k}\prod_{l=1}^{2m}\theta^{j_l},\label{SASF}
\end{equation}
and such that \footnote{Since $\{\bar H_i,\bar F_i\}$ and $\{ H_i,F_i\}$ are linearly related by some upper triangular matrix, the resulting differential constraints $H_ie^{\mathcal{F}}= F_ie^{\mathcal{F}}=0$ are equivalent to $\bar H_ie^{\mathcal{F}_F}=\bar F_ie^{\mathcal{F}_F}=0$.\label{footnote:equiv}}
\begin{equation}
\forall \;i\in\mathbb{Z}_{>0},\;\;\;\;H_iZ_F=F_iZ_F=0.\label{SVconstraints}
\end{equation}
Note that $F_{g,n|2m}$ is symmetric under permutations of the $n$ first entries, anti-symmetric under permutations of the last $2m$ entries, with no further symmetry. Now, since a choice of parameters $(F,N,\tau_{l},\phi_{kl},\psi_{mn}^F,D_k,Q_l)$ uniquely fixes a pair of a super spectral curve and a super Airy structure, one may ask: are there any relation between $F_{g,n|2m}$ and $\omega_{g,n|2m}$ computed by Proposition~\ref{prop:STR}? The following is the main theorem of the present paper which presents an explicit dictionary between  $F_{g,n|2m}$ and $\omega_{g,n|2m}$:

\begin{theorem}\label{thm:main}
	~
	\begin{enumerate}
		\item
			Consider the super Airy structure $\tilde{\mathcal{S}}_F$ in Proposition \ref{prop:SAS}, defined in terms of parameters $(F,N,\tau_{l},\phi_{kl},\psi_{mn}^F,D_k,Q_l)$. Let
			\begin{equation}
				F_{g,n|2m}(i_1,\ldots , i_n | j_1, \ldots, j_{2m} )
				\end{equation}
				be the coefficients of the associated unique free energy $\mathcal{F}_F$. 
\item
	Let $\mathcal{S}_F$ be the super spectral curve defined with the same parameters. Consider an infinite sequence of multilinear differentials $\omega_{g,n|2m}$ that respect the polarization,
and that satisfy the abstract super loop equations (Definition \ref{def:SLE}). We expand the differentials in terms of the polarised basis as: 
\begin{align}
	\omega_{g,n|2m}(I|J)=\sum_{\substack{i_1,...,i_n>1\\j_1,...,j_{2m}\geq0}}\hat F_{g,n|2m}(i_1,...,i_n|j_1,...,j_{2m})\bigotimes_{k=1}^n d\xi_{-i_k}(z_k)\otimes\bigotimes_{l=1}^{2m}\eta_{-j_l-f}(u_l,\theta_l) .\label{thm}
\end{align}
\end{enumerate}
Then, for all $g,n,m$, and indices $i_1, \ldots, i_n$ and $j_1, \ldots, j_{2m}$,
\begin{equation}
	\hat F_{g,n|2m}(i_1,...,i_n|j_1,...,j_{2m}) = 	F_{g,n|2m}(i_1,\ldots , i_n | j_1, \ldots, j_{2m} ).\label{F=F}
\end{equation}

\end{theorem}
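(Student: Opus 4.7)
My plan is to translate the super Virasoro constraints $H_i Z_F = F_i Z_F = 0$ directly into the abstract super loop equations of Definition~\ref{def:SLE}, and then invoke the uniqueness statement of Proposition~\ref{prop:STR}. Concretely, from the free energy coefficients I define candidate multilinear differentials
\begin{equation}
\tilde{\omega}_{g,n|2m}(I|J) := \sum_{\substack{i_1,\dots,i_n>0 \\ j_1,\dots,j_{2m}\geq 0}} F_{g,n|2m}(i_1,\dots,i_n|j_1,\dots,j_{2m}) \bigotimes_{k=1}^{n} d\xi_{-i_k}(z_k) \otimes \bigotimes_{l=1}^{2m} \eta_{-j_l-f}(u_l,\theta_l),
\end{equation}
which manifestly respect the polarisation and encode exactly the $F_{g,n|2m}$. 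If I can prove that these $\tilde{\omega}_{g,n|2m}$ solve the abstract super loop equations, then Proposition~\ref{prop:STR} forces $\tilde{\omega}_{g,n|2m} = \omega_{g,n|2m}$, which is precisely \eqref{F=F}.

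The first step is to unpack the constraints into a recursion on the $F_{g,n|2m}$. Using the explicit $\Phi_N$-actions \eqref{PhiJ}--\eqref{PhiGamma}, the dilaton shifts $\tau_l$, crosscap parameters $Q_l$, polarisation parameters $\phi_{kl},\psi_{kl}^F$, and $D$-terms appear as specific $c$-number and linear insertions of the modes $J_a$, $\Gamma_r$ in the conjugated operators $\Phi_N L_{N+i-1}\Phi_N^{-1}$ and $\Phi_N G_{N+i+f-1}\Phi_N^{-1}$. Applying the normal-ordered bilinears in \eqref{L1}--\eqref{G1} to $e^{\mathcal{F}_F}$ then produces polynomial identities among the $F_{g,n|2m}$, indexed by $i \geq 1$. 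I then repackage these identities by multiplying the $i$-th identity against $d\xi_{-(i-N)}(z)$ in the bosonic case and $\eta_{-(i-N)-f}(z,\theta)$ in the fermionic case, and summing over $i \geq 1$, to obtain tensor-valued generating series in $z$.

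The core of the proof is a term-by-term matching. The Heisenberg bilinear $:J_{-j}J_{n+j}:$ in $L_n$ produces both the splitting contribution $\tilde{\omega}_{g-1,n+2|2m}(z,z,I|J)$, from a single second derivative hitting one $F_{g-1,\cdot|\cdot}$, and the convolution sums $\tilde{\omega}\cdot\tilde{\omega}$, from two first derivatives acting on distinct exponential factors; together they reproduce $\mathcal{Q}^{BB}_{g,n+1|2m}$ in \eqref{QBB}. The fermionic bilinear $:\Gamma_{-s}\Gamma_{s+n}:$ with its weight $(n/2+s)$, after resummation against $\eta_{-j-f}(u)$ and use of the basis expansions \eqref{NS02expansion}--\eqref{R02expansion}, converts the weight into the derivative $\mathcal{D}_z$ and delivers $\mathcal{Q}^{FF}_{g,n+1|2m}$ in \eqref{QFF}. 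The mixed bilinear $:J_{-j}\Gamma_{j+r}:$ in $G_r$ supplies $\mathcal{Q}^{BF}_{g,n|2m+2}$ of \eqref{QFB} analogously. The $Q_0$-linear terms in \eqref{L1}--\eqref{G1} reproduce the $\omega_{\frac12,1|0}$-weighted line in \eqref{QFB}, with the $(1-2f)/2\cdot d\xi_0(z)$ contribution arising from the Ramond zero-mode shift $\delta_{n,0}\delta_{F,R}\hbar/16$ combined with the zero-mode basis \eqref{R basis0}. Finally the $\hbar\tilde{D}_i$ and the subtracted quadratic constant in \eqref{Hi} furnish the $\omega_{1,1|0}$ contribution and precisely cancel the $c$-number debris generated when normal-ordering against the $\tau$- and $Q$-shifts, as was engineered in the proof of Proposition~\ref{prop:SAS}.

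The hard part will be the polar bookkeeping: verifying that after all these identifications the combinations $\mathcal{Q}^{BB}_{g,n+1|2m}+\mathcal{Q}^{FF}_{g,n+1|2m}$ and $\mathcal{Q}^{BF}_{g,n|2m}$ actually land in $z^{-N-1}V_z^{B+}\otimes V_z^{B+}$ and $z^{-N-1}V_z^{B+}\otimes V_z^{F+}$, and not merely in some larger polar space. This reflects exactly the fact that constraints are imposed only by $H_i, F_i$ with $i\geq 1$, i.e.\ by super Virasoro modes $\geq N$ (resp.\ $\geq N+f$); the lower modes $L_{N-1},\dots,L_{2N-2}$ and $G_{N-1+f},\dots$ are left unconstrained, and their absence is precisely the $z^{-N-1}V_z^{B+}$ tail allowed by the loop equation. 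Tracking this requires careful accounting of which powers of $z$ can appear on the right-hand side once the basis elements $d\xi_{-k}$, $\eta_{-k-f}$ are re-summed, and verifying that the subtractions built into \eqref{Hi} remove exactly the non-$V_z^{B+}$ remainders. Once this is established, Proposition~\ref{prop:STR} applies to both $\omega_{g,n|2m}$ and $\tilde{\omega}_{g,n|2m}$ and gives the equality of coefficients.
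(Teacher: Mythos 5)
Your plan is essentially the proof the paper gives, just packaged in the opposite direction: the paper extracts from $H_iZ_F=F_iZ_F=0$ the coefficient recursions \eqref{BSAS}--\eqref{FSAS}, shows that taking residues of the loop equations against $z^{i+N}/dz$ and $z^{i+N+2f-1}\Theta^F_z/dz$ yields the identical recursions for $\hat F_{g,n|2m}$, and concludes by uniqueness; you instead assemble generating differentials $\tilde\omega_{g,n|2m}$ from $\mathcal{F}_F$, verify the loop equations for them, and invoke Proposition~\ref{prop:STR}. The term-by-term dictionary you describe ($:JJ:$, $:\Gamma\Gamma:$, $:J\Gamma:$ against $\mathcal{Q}^{BB},\mathcal{Q}^{FF},\mathcal{Q}^{BF}$, the $Q_0$-linear terms, the $D$-terms and constant subtractions in \eqref{Hi}) is exactly the computation in the paper, and the ``polar bookkeeping'' you flag is no harder than the residue computations there: membership in $z^{-N-1}V_z^{B+}\otimes V_z^{B+}$ (resp.\ $z^{-N-1}V_z^{B+}\otimes V_z^{F+}$) is equivalent to the vanishing of the residues against $z^{i+N}/dz$ (resp.\ $z^{i+N+2f-1}\Theta^F_z/dz$) for all $i\geq1$, since the $\mathcal{Q}$'s have finite-order poles at $z=0$, and these residues are precisely what the constraints indexed by $i\geq1$ control.

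Two details need repair, though neither breaks the strategy. First, your claimed origin of the $\tfrac{1-2f}{2}d\xi_0(z)$ term in \eqref{QFB} is wrong: the shift $\delta_{n,0}\delta_{F,R}\hbar/16$ sits only in $L_0$, and $L_0$ never enters the constraints ($H_i$ involves $L_{N+i-1}$ with $i\geq1$, $N\geq1$), so it can contribute nothing. That term instead arises from the $-(r+\tfrac12)\hbar^{\frac12}Q_0\Gamma_r$ piece of \eqref{G1} interacting with the Ramond-sector structure ($(\Theta^R_z)^2=z\,dz$, the $1/z$ leading behaviour of $\eta_0,\eta_{-l}$, and the extra factor $z^{2f-1}$ in the fermionic residue pairing); compare \eqref{CkF6}. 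Second, for Proposition~\ref{prop:STR} to identify your $\tilde\omega$'s with the $\omega$'s of the curve $\mathcal{S}_F$ you must separately check the base data, in particular that $F_{1,1|0}(i)=D_i$ (and $F_{0,3|0}=F_{0,1|2}=0$), so that $\tilde\omega_{1,1|0}$ coincides with the curve's $\omega_{1,1|0}$ wherever it feeds into the $\mathcal{Q}$'s; this is the check \eqref{F11D} in the paper, and it is not produced by the recursion itself since $(g,n,m)=(1,1,0)$ is excluded from it.
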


\begin{proof}
Since the proof is based on computations, let us explain here the strategy of the proof and leave the computational details to Appendix~\ref{app:thm}.

Recall that two different sets of equations/constraints have been introduced: one is abstract super loop equations which is defined geometrically (Definition~\ref{def:SLE}), and the other is super Virasoro-like constraints for the partition function of the super Airy structure \eqref{SVconstraints} which is defined algebraically. We then consider:
\begin{itemize}
    \item the set of recursive equations for $\hat{F}_{g,n|2m}$ for each $g,n,m$ which is reduced by manipulations of abstract super loop equations (Definition~\ref{def:SLE}) and by using the relation/definition \eqref{thm} between $\omega_{g,n|2m}$ and $\hat{F}_{g,n|2m}$,
    \item the set of recursive equations for $F_{g,n|2m}$ for each $g,n,m$ which is obtained by explicit evaluation of the super Virasoro-like constraints \eqref{SVconstraints}.
\end{itemize}
After careful yet tedious computations, one finds that the set of recursive equations for $\hat{F}_{g,n|2m}$ is precisely the same as the one for $F_{g,n|2m}$, hence uniqueness of the solution (Theorem~\ref{thm:SAS1}) of super Airy structures implies that $\hat{F}_{g,n|2m}=F_{g,n|2m}$, that is, \eqref{F=F} holds. See Appendix~\ref{app:thm} for the computational details.

\end{proof}

Theorem~\ref{thm:SAS1} and Theorem~\ref{thm:main} immediately show that a unique solution to the untwisted/$\mu$-twisted abstract super loop equations that respects the polarization exists. Thus, we have:

\begin{corollary}\label{coro:main}
	There exists a solution to the untwisted/$\mu$-twisted abstract super loop equations that respects the polarization, and it is uniquely constructed by the untwisted/$\mu$-twisted super topological recursion of Proposition \ref{prop:STR}.
\end{corollary}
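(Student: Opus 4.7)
The plan is to assemble the pieces already established in Sections~\ref{sec:STR} and \ref{sec:SAS}. Uniqueness is immediate from Proposition~\ref{prop:STR}, which asserts that \emph{any} solution of the abstract super loop equations is forced to coincide with the output of the super topological recursion. So the real work lies entirely in establishing existence, and my strategy is to construct the $\omega_{g,n|2m}$ from the free energy of the super Airy structure $\tilde{\mathcal{S}}_F$ and then invoke the dictionary of Theorem~\ref{thm:main} in reverse.

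Concretely, I would proceed as follows. By Proposition~\ref{prop:SAS}, the collection $\tilde{\mathcal{S}}_F$ is a super Airy structure, so Theorem~\ref{thm:SAS1} produces a unique free energy $\mathcal{F}_F$ whose coefficients $F_{g,n|2m}$ satisfy the super Virasoro-type constraints $H_i Z_F = F_i Z_F = 0$. I would then \emph{define} $\omega_{g,n|2m}$ via the expansion formula \eqref{thm} with $\hat F_{g,n|2m}$ replaced by these $F_{g,n|2m}$. By the $\mathbb{Z}_2$-graded symmetry of $F_{g,n|2m}$ noted after \eqref{SASF}, and by the fact that $d\xi_{-i}\in V_z^{B-}$ and $\eta_{-j-f}\in V_{u,\theta}^{F\,0,-}$, the resulting $\omega_{g,n|2m}$ automatically respect the polarization with the correct (anti)symmetry properties.

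The core step is to verify that these differentials satisfy Definition~\ref{def:SLE}. The key observation is that the computation inside the proof of Theorem~\ref{thm:main} did not merely show a one-way implication but in fact established an \emph{equivalence}: expanding the abstract super loop equations against the polarised basis---by extracting the residues $\mathrm{Res}_{z=0}\, z^{i+N}/dz$ and $\mathrm{Res}_{z=0}\, z^{i+N+2f-1}\Theta^F_z/dz$ as in \eqref{QB2}--\eqref{QF2}---produces precisely the recursion relations \eqref{BSAS} and \eqref{FSAS} that arise from $H_i Z_F = F_i Z_F = 0$. Since the $F_{g,n|2m}$ already satisfy those recursions, our $\omega_{g,n|2m}$ satisfy the abstract super loop equations. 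Proposition~\ref{prop:STR} then closes the argument and identifies the unique solution with the output of the super topological recursion.

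There is no genuine obstacle here, since the hard matching of the two systems of recursions was already carried out in Theorem~\ref{thm:main}. The only care required is bookkeeping: one must check that the equality of the two recursion systems is strong enough to run the deduction in the reverse direction. This holds because the equivalence is coefficient-by-coefficient in the expansion against the complete basis $\bigotimes d\xi_{-i_k}\otimes\bigotimes \eta_{-j_l-f}$, and the abstract super loop equations of Definition~\ref{def:SLE} are themselves equivalent to the vanishing of all such expansion coefficients for $i\in\mathbb{Z}_{\geq 1}$.
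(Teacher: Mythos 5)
Your proposal is correct and follows essentially the same route as the paper, which deduces the corollary directly from Theorem~\ref{thm:SAS1} and Theorem~\ref{thm:main} (with the recursive construction supplied by Proposition~\ref{prop:STR}): existence comes from defining $\omega_{g,n|2m}$ by the expansion \eqref{thm} with the coefficients $F_{g,n|2m}$ of the unique free energy, and the loop equations hold because their expansion against the polarised basis is coefficient-by-coefficient the system \eqref{BSAS}--\eqref{FSAS}. Your explicit check that this dictionary runs in reverse is exactly the bookkeeping the paper leaves implicit in the word ``immediately''.
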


\begin{remark}
If one drops all fermionic dependences from discussions, not only $\psi_{kl}$ but also all fermionic modes $\Gamma_i$ and vector spaces $V_z^F$, one gets the topological recursion without branch covers. In particular, Theorem~\ref{thm:main} holds for any positive integer $N$ which is more general than an analogous recursion for $r=2$ in \cite{BBCC}.
\end{remark}

\section{Applications to Superconformal Blocks}\label{sec:SCB}
In the previous section, we presented two dual ways of solving untwisted/$\mu$-twisted abstract super loop equations. In this section, we will apply them to compute the so-called Gaiotto vectors for $\mathcal{N}=1$ superconformal blocks. Thanks to the AGT correspondence, this realises an interesting application of the untwisted/$\mu$-twisted super topological recursion to $\mathcal{N}=2$ pure $U(2)$ supersymmetric gauge theory on $\mathbb{C}^2/\mathbb{Z}_2$.

\subsection{Gaiotto Vectors}
We will define the Gaiotto vector in the Neveu-Schwarz sector and the Ramond sector in this section. Since we would like to adapt the results of \cite{BF,Ito} to connect with four dimensional supersymmetric gauge theories, we follow the presentations of \cite{BF,Ito} in part. In particular, we call the Neveu-Schwarz/Ramond sector instead of the untwisted/$\mu$-twisted sector.

\subsubsection{Gaiotto vector in the Neveu-Schwarz Sector}

Let $\{L_n,G_r\}$ be generators of the $\mathcal{N}=1$ super Virasoro algebra in the Neveu-Schwarz sector of central charge $c$, that is, $n\in\mathbb{Z}$ and $r\in\mathbb{Z}+\frac12$, and we denote by $\mathcal{V}_{\Delta,NS}$ the Verma module of highest weight $\Delta$. Note that since we are interested in the Verma module corresponding to super Liouville field theory \eqref{L1}, \eqref{G1}, $c$ and $\Delta$ are given by
\begin{equation}
c=\hbar\left(\frac32-3Q^2\right),\;\;\;\;\Delta=\frac12(\tau_0^2-\hbar Q^2)\label{weightNS}
\end{equation}
Then, the highest weight state $\ket{\Delta}$ satisfies the following conditions:
\begin{equation}
L_0\ket{\Delta}=\Delta\ket{\Delta},\;\;\;\;\forall n,r>0\;\;\;\;L_n\ket{\Delta}=G_r\ket{\Delta}=0.
\end{equation}
We consider the $L_0$-eigenvalue decomposition $\mathcal{V}_{\Delta,NS}=\bigoplus_{M}\mathcal{V}_{\Delta,NS}^M$ where $M\in\frac12\mathbb{Z}_{\geq0}$, and each $\mathcal{V}_{\Delta,NS}^{M}$ is given as
\begin{equation}
\mathcal{V}_{\Delta,NS}^{M}=\text{Span}\left\{\prod_{i=1}^k\prod_{j=1}^l L_{-n_i}G_{-r_j}\ket{\Delta}\right\},
\end{equation}
where
\begin{equation}
n_1\geq\cdots\geq n_k>0,\;\;\;\;r_1>\cdots> r_l>0,\;\;\;\;\sum_{i=1}^kn_i+\sum_{j=1}^l r_j=M
\end{equation}

For $M\in\frac12\mathbb{Z}_{\geq0}$, let us assume that there exists a set of vectors $\ket{M}\in\mathcal{V}_{\Delta,NS}^{M}$ satisfying
\begin{equation}
L_1\ket{M}=\ket{M-1},\;\;\;\;\forall n, r>1\;\;\;\;L_n\ket{M}=G_r\ket{M}=0,\label{WhittakerNS}
\end{equation}
where $\ket{0}=\ket{\Delta}$. Note that we do \emph{not} impose anything for the action of $G_{\frac12}$. However, if one defines  another set of vectors $\widetilde{\ket{M}}\in\mathcal{V}_{\Delta,NS}^M$ by
\begin{equation}
\widetilde{\ket{M-\frac12}}:=G_{\frac12}\ket{M},
\end{equation}
then it is easy to show from \eqref{WhittakerNS} that
\begin{equation}
L_1\widetilde{\ket{M}}=\widetilde{\ket{M-1}},\;\;\;\;\forall n, r>1\;\;\;\;L_n\widetilde{\ket{M}}=G_r\widetilde{\ket{M}}=0,\label{WhittakerNS1}
\end{equation}
Therefore, \eqref{WhittakerNS} is equivalent to \cite[Eq. (3.15)]{BF}

 We are now ready to define the Gaiotto vector in the Neveu-Schwarz sector:
\begin{definition}[{\cite[Section 3.1]{BF}}]
Let us assume that there exists a set of vectors $\{\ket{M}\}\in\mathcal{V}_{\Delta,NS}^{M}$ satisfying \eqref{WhittakerNS}. Then, for a formal variable $\Lambda\in\mathbb{C}$, the ``Gaiotto vector in the Neveu-Schwarz sector'' $\ket{G}_{NS}\in\mathcal{V}_{\Delta,NS}$ is defined by
\begin{equation}
\ket{G}_{NS}:=\sum_{M\in\frac12\mathbb{Z}_{\geq0}}\Lambda^{2M}\ket{M}.
\end{equation}
\end{definition}

The Gaiotto vector naturally arises in the so-called ``Gaiotto limit'', ``Whittaker limit'', or ``irregular limit'' in the context of superconformal blocks. We refer to the readers \cite{BF,Ito} and references therein for further details.  Notice that one can show from \eqref{WhittakerNS} that the Gaiotto vector satisfies:
\begin{equation}
L_1\ket{G}_{NS}=\Lambda^2\ket{G}_{NS},\;\;\;\;\forall n,r>1\;\;\;\;L_n\ket{G}_{NS}=G_r\ket{G}_{NS}=0.\label{WhittakerNS2}
\end{equation}
Indeed, one can alternatively take \eqref{WhittakerNS2} as a definition of the Gaiotto vector\footnote{Without supersymmetry, a vector $\ket{w}$ in the Verma module satisfying $L_1 \ket{w}=\Lambda\ket{w}$ and $L_{n\geq2}\ket{w}=0$ is called the ``Whittaker vector'' whereas the corresponding Gaiotto vector is rather defined as a formal sum of some cohomology classes of an appropriate instanton moduli space \cite[Section 2.1]{BBCC}. Thus, strictly speaking, one may call $\ket{G}$ in Definition~\ref{def:GaiottoNS} the Whittaker vector instead of the Gaiotto vector, though the AGT correspondence states that these two vectors are equivalent to each other. In the present paper, however, we call it the Gaiotto vector in order to emphasise a relation to four-dimensional supersymemtric gauge theory.}.
\begin{definition}\label{def:GaiottoNS}
For $\{\ket{M}\}\in\mathcal{V}_{\Delta,NS}^{M}$ for every $M\in\frac12\mathbb{Z}$, we consider a vector $\ket{G}_{NS}\in\mathcal{V}_{\Delta,NS}$ as a formal power series of $\Lambda$ by
\begin{equation}
\ket{G}_{NS}:=\sum_{M\in\frac12\mathbb{Z}_{\geq0}}\Lambda^{2M}\ket{M}.
\end{equation}
Then, $\ket{G}_{NS}$ is said to be the ``Gaiotto vector in the Neveu-Schwarz sector'' if it satisfies
\begin{equation}
L_1\ket{G}_{NS}=\Lambda^2\ket{G}_{NS},\;\;\;\;\forall n,r>1\;\;\;\;L_n\ket{G}_{NS}=G_r\ket{G}_{NS}=0.\label{WhittakerNS3}
\end{equation}
\end{definition}
One can easily derive \eqref{WhittakerNS} from \eqref{WhittakerNS3} order by order in $\Lambda$. We call a vector $\ket{M}$ satisfying \eqref{WhittakerNS} the ``Gaiotto vector of level $M$ in the Neveu-Schwarz sector''. We note that existence of such a vector is not \emph{a priori} guaranteed.

\subsubsection{Gaiotto Vector in the Ramond Sector}

Let $\{L_n,G_r\}$ be generators of the $\mathcal{N}=1$ super Virasoro algebra in the Ramond sector of central charge $c$, that is, $n,r\in\mathbb{Z}$, and we denote by $\mathcal{V}_{\Delta,R}$ the Verma module of highest weight $\Delta$. Note that in the Ramond sector, $c$ and $\Delta$ are given by
\begin{equation}
c=\hbar\left(\frac32-3Q^2\right),\;\;\;\;\Delta=\frac{\hbar}{16}+\frac12(\tau_0^2-\hbar Q^2)\label{weightR}
\end{equation}
Unlike the Neveu-Schwarz sector, there are two highest weight states $\ket{\Delta}_{\pm}$ satisfying the following conditions:
\begin{equation}
L_0\ket{\Delta}_{\pm}=\Delta\ket{\Delta}_{\pm},\;\;\;\;G_0\ket{\Delta}_{\pm}=\sqrt{\Delta-\frac{c}{24}}\ket{\Delta}_{\mp},\;\;\;\;\forall n,r>0\;\;\;\;L_n\ket{\Delta}_{\pm}=G_r\ket{\Delta}_{\pm}=0,
\end{equation}
We consider the $L_0$-eigenvalue decomposition $\mathcal{V}_{\Delta,R}=\bigoplus_{M}\mathcal{V}_{\Delta,R}^M$ where $M\in\mathbb{Z}_{\geq0}$ in the Ramond sector, and each $\mathcal{V}_{\Delta,R}^{M}$ is given as
\begin{equation}
\mathcal{V}_{\Delta,R}^{M}=\text{Span}\left\{\prod_{i=1}^k\prod_{j=1}^l L_{-n_i}G_{-r_j}\ket{\Delta}_{\pm}\right\},
\end{equation}
where
\begin{equation}
n_1\geq\cdots\geq n_k>0,\;\;\;\;r_1>\cdots> r_l>0,\;\;\;\;\sum_{i=1}^kn_i+\sum_{j=1}^l r_j=M
\end{equation}

We now define the Gaiotto vectors in the Ramond sector similar to Definition~\ref{def:GaiottoNS}:
\begin{definition}[{\cite[Section 3]{Ito}}]\label{def:GaiottoR}
For $\{\ket{M}_{\pm}\}\in\mathcal{V}_{\Delta,R}^{M}$ for every $M\in\mathbb{Z}$, we consider two vectors $\ket{G}_{R\pm}\in\mathcal{V}_{\Delta,R}$ as formal power series of $\Lambda$ by
\begin{equation}
\ket{G}_{R\pm}:=\sum_{M\in\mathbb{Z}_{\geq0}}\Lambda^{2M}\ket{M}_{\pm}.
\end{equation}
Then, $\ket{G}_{R\pm}$ are said to be the ``Gaiotto vectors in the Ramond sector'' if it satisfies
\begin{equation}
L_1\ket{G}_{R\pm}=\frac{\Lambda^2}{2}\ket{G}_{R\pm},\;\;\;\;G_1\ket{G}_{R\pm}=0,\;\;\;\;\forall n,r>1\;\;\;\;L_n\ket{G}_{R\pm}=G_r\ket{G}_{R\pm}=0.\label{WhittakerR3}
\end{equation}
\end{definition}
Note that there are two Gaiotto vectors $\ket{G}_{R\pm}$ and they encode exactly the same information. We call $\ket{G}_{R+}$ the ``bosonic Gaiotto vector'' and $\ket{G}_{R-}$ the ``fermionic Gaiotto vector'', respectively. It is straightforward to show that $\ket{M}_{\pm}$ in the Gaiotto vectors satisfy:
\begin{equation}
L_1\ket{M}_{\pm}=\frac12\ket{M-1}_{\pm},\;\;\;\;G_1\ket{M}_{\pm}=0,\;\;\;\;\forall n,r>1\;\;\;\;L_n\ket{M}_{\pm}=G_r\ket{M}_{\pm}=0.\label{WhittakerR}
\end{equation}
We call a vector $\ket{M}_{+(-)}$ satisfying \eqref{WhittakerR} the ``bosonic (fermionic) Gaiotto vector of level $M$ in the Ramond sector'' \footnote{We abuse the notation and $\ket{M}$ without any subscript refers to the Gaiotto vector of level $M$ in the Neveu-Schwarz sector while $\ket{M}_{\pm}$ with subscript are in the Ramond sector.}. We again note that existence of such vectors is not \emph{a priori} guaranteed.

\subsection{Nekrasov Partition Function}
Let us now briefly review a conjectural relation to $\mathcal{N}=2$ pure $U(2)$ gauge theory on $\mathbb{C}^2/\mathbb{Z}_2$. See \cite{BF,BMT,Ito,Itothesis} for further details.  For more general perspectives of the AGT correspondence, we refer to the readers \cite{LF,T1,T2}.

The Nekrasov partition function of pure $U(2)$ theory on $\mathbb{C}^2/\mathbb{Z}_2$ depends on three parameters $(\epsilon_1,\epsilon_2,a)$, similar to pure $U(2)$ theory on $\mathbb{C}^2$ where $\epsilon_1,\epsilon_2$ are the equivariant parameters of the $(\mathbb{C}^*)^2$-action and $\pm a$ are the eigenvalues of the vector multiplet scalar in the Coulomb branch. However, there are two distinct features about gauge theory on $\mathbb{C}^2/\mathbb{Z}_2$.

One of them is nontrivial holonomies. Let $A$ be a flat $U(2)$-connection of the gauge theory. Since the asymptotic region of $\mathbb{C}^2/\mathbb{Z}_2$ is isomorphic to $S^3/\mathbb{Z}_2$, there are noncontractable cycles, and the holonomy of a cycle in that region
\begin{equation}
U=\exp\left(i \oint A\right)
\end{equation}
satisfies $U^2=1$. Thus, there are four inequivalent classes of holonomies as
\begin{equation}
U=\{\text{diag}(1,1),\text{diag} (1,-1), \text{diag}(-1,1), \text{diag}(-1,-1)\}.
\end{equation}
As a consequence, the gauge theory admits \emph{two} sectors which we call the Neveu-Schwarz and Ramond sector due to the correspondence to superconformal blocks stated shortly. The holonomies of type $(1,1)$ and $(-1,-1)$ contribute to the Neveu-Schwarz sector whereas the Ramond sector is described by the holonomies of type $(1,-1)$ and $(-1,1)$. Note that the Ramond sector (holonomy of type $(1,-1)$ and $(-1,1)$) does not exist in $SU(2)$ theory. 

Another difference is about instanton moduli spaces. Since the $\mathbb{Z}_2$-action sends $(z_1,z_2)\in\mathbb{C}^2\mapsto(-z_1,-z_2)$, the path integral is computed by summing over the space of the $\mathbb{Z}_2$-invariant field configurations on $\mathbb{C}^2$. That is, the instanton moduli spaces for $U(2)$ gauge theory on $\mathbb{C}^2/\mathbb{Z}_2$ are appropriate $\mathbb{Z}_2$-symmetric subspaces of those for $U(2)$ gauge theory on $\mathbb{C}^2$. See \cite{BF,Ito} for practical computations of the Nekrasov partition function in terms of Young tableaux.

Let us now state the conjecture given in \cite{BF,Ito}\footnote{\cite{Ito} takes the BPZ conjugation but we use the standard Hermite conjugation. In particular, there is no $(-i)$ in \eqref{AGTR} in our notation.}:
\begin{conjecture}[\cite{BF,Ito}]\label{conj:AGT}
Let $Z_{{\rm Nek}}^{2M,(q_1,q_2)}$ be the $2M$-instanton contributions to the Nekrasov partition function for $\mathcal{N}=2$ pure $U(2)$ gauge theory on $\mathbb{C}^2/\mathbb{Z}_2$ of holonomy type $(q_1,q_2)$ where $q_1,q_2\in\{\pm1\}$. Also, let $\ket{M}, \ket{M}_{\pm}$ be the Gaiotto vectors of level $M$ in the Neveu-Schwarz sector and Ramond sector respectively. Then, they satisfy:
\begin{align}
\forall M\in\mathbb{Z}_{\geq0},\;\;\;\;\;\;\;\;Z_{{\rm Nek}}^{2M,(1,1)}&=\braket{M|M},\\
\forall M\in\mathbb{Z}_{\geq0}+\frac12,\;\;\;\;Z_{{\rm Nek}}^{2M,(-1,-1)}&=\braket{M|M},\\
\forall M\in\mathbb{Z}_{\geq0},\;\;\;\;\;\;Z_{{\rm Nek}}^{2M,(1,-1)}&=Z_{{\rm Nek}}^{2M,(-1,1)}=\,_+\braket{M|M}_+=\,_-\braket{M|M}_-.
\end{align}
Equivalently, let $Z_{{\rm Nek}}^{F}$ be the Nekrasov partition function for $\mathcal{N}=2$ pure $U(2)$ gauge theory on $\mathbb{C}^2/\mathbb{Z}_2$ with $F\in\{NS,R\}$, that is,
\begin{align}
&Z_{{\rm Nek}}^{NS}=\sum_{M\in\mathbb{Z}_{\geq0}}\Lambda^{4M}Z_{{\rm Nek}}^{2M,(1,1)}+\sum_{M\in\mathbb{Z}_{\geq0}+\frac12}\Lambda^{4M}Z_{{\rm Nek}}^{2M,(-1,-1)},\\
&Z_{{\rm Nek}}^{R}=\sum_{M\in\mathbb{Z}_{\geq0}}\Lambda^{4M}Z_{{\rm Nek}}^{2M,(1,-1)}=\sum_{M\in\mathbb{Z}_{\geq0}}\Lambda^{4M}Z_{{\rm Nek}}^{2M,(-1,1)},
\end{align}
Also, let $\ket{G}_{NS},\ket{G}_{R\pm}$ be the Gaiotto vectors in the Neveu-Schwarz sector and Ramond sector respectively. Then, they satisfy:
\begin{align}
Z_{{\rm Nek}}^{NS}&=\,_{NS}\braket{G|G}_{NS},\label{AGTNS}\\
Z_{{\rm Nek}}^{R}&=\,_{R+}\braket{G|G}_{R+} =\,_{R-}\braket{G|G}_{R-},\label{AGTR}
\end{align}
where the parameters $(\epsilon_1,\epsilon_2,a)$ and $(\hbar,Q_0,\tau_0)$ are identified as
\begin{equation}
-\epsilon_1\epsilon_2=\hbar,\;\;\;\;\epsilon_1+\epsilon_2=\hbar^{\frac12}\,Q_0,\;\;\;\;a=\tau_0.
\end{equation}
\end{conjecture}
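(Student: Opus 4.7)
The plan is to combine the algebraic characterization of $\ket{G}_F$ provided by Proposition~\ref{prop:main} with an explicit combinatorial expansion of $Z_{\mathrm{Nek}}^F$ over $\mathbb{Z}_2$-invariant instantons. In each sector $F\in\{NS,R\}$, the left-hand side is computed by equivariant localization on the $\mathbb{Z}_2$-invariant component of the $U(2)$ instanton moduli space on $\mathbb{C}^2$, yielding a sum over pairs of Young diagrams constrained by the holonomy type $(q_1,q_2)$; the right-hand side is the norm of $\ket{G}_F$ in the Verma module of the $\mathcal{N}=1$ super Virasoro algebra. Because this paper realizes $\ket{G}_F$ as the unique partition function of an untwisted or $\mu$-twisted super Airy structure after a simple change of formal variables, the coefficients $\braket{M|M}$ and $\,_{\pm}\braket{M|M}_{\pm}$ can in principle be computed recursively through the super Virasoro constraints \eqref{WhittakerNS3} and \eqref{WhittakerR3}.

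First I would verify the base cases at low levels $M$, where both sides are computable by hand: on the gauge theory side one writes down the single-box and two-box instanton contributions from Young diagrams carrying the appropriate $\mathbb{Z}_2$-symmetry, and on the super Virasoro side one inverts the Shapovalov-type Gram matrix at levels $M=\tfrac{1}{2},1,\tfrac{3}{2},2$. This simultaneously fixes the dictionary $-\epsilon_1\epsilon_2=\hbar$, $\epsilon_1+\epsilon_2=\hbar^{1/2}Q_0$, $a=\tau_0$ and supplies initial data for a recursion. Next I would establish matching recursions on both sides: the relations \eqref{WhittakerNS} and \eqref{WhittakerR} express each level-$M$ norm as a sum involving lower-level norms after inverting $L_{-1}$ on the relevant subspace; on the Nekrasov side the analogous recursion is obtained by peeling off a removable box (or a $\mathbb{Z}_2$-invariant pair thereof) from each Young diagram and applying the standard hook-length decomposition of the tangent weights at the toric fixed points.

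The hard part, as in the original AGT correspondence, will be showing that these two recursions coincide term-by-term. The combinatorial identities required are delicate because the equivariant weights involve subtle cancellations between bosonic and fermionic contributions, and in the Ramond sector one must additionally verify $\,_{+}\braket{M|M}_{+}=\,_{-}\braket{M|M}_{-}$ using the $G_0$-intertwining between the two highest-weight states $\ket{\Delta}_{\pm}$. A potentially cleaner route is geometric: construct a natural action of the $\mathcal{N}=1$ super Virasoro algebra on the equivariant cohomology of the $\mathbb{Z}_2$-invariant instanton moduli space, generalizing the Schiffmann--Vasserot and Maulik--Okounkov constructions, and arranged so that the fundamental class of each component realizes a Whittaker vector in the sense of \eqref{WhittakerNS} or \eqref{WhittakerR}. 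The conjecture would then follow from uniqueness of $\ket{M}$ and $\ket{M}_{\pm}$, whose existence half is precisely what Proposition~\ref{prop:main} establishes through the super Airy structure formalism.
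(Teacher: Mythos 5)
This statement is not proven in the paper at all: it is stated as Conjecture~\ref{conj:AGT} and attributed to \cite{BF,Ito}, and the paper's own contribution (Proposition~\ref{prop:main} and its corollary) is only the existence of the Gaiotto vectors and their identification with super Airy structure partition functions, not the equality with the Nekrasov partition function. Your proposal likewise does not prove it. What you have written is a research program: verify low levels, set up a recursion on the Verma-module side via \eqref{WhittakerNS} and \eqref{WhittakerR}, set up a box-removal recursion on the instanton side, and then ``show these two recursions coincide term-by-term'' --- but that last step is precisely the entire content of the conjecture, and you give no argument for it. The equivariant weights of the $\mathbb{Z}_2$-invariant fixed points and the inverse Shapovalov form of the $\mathcal{N}=1$ super Virasoro Verma module do not obviously satisfy matching recursions; establishing such an identity (or, in your alternative route, constructing a super Virasoro action on the equivariant cohomology of the instanton moduli space on $\mathbb{C}^2/\mathbb{Z}_2$ in which the fundamental class is a Whittaker vector, generalizing Schiffmann--Vasserot/Maulik--Okounkov) is an open problem, not a step one can defer.

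Two further specific gaps: in the Ramond sector the claimed equality $\,_{+}\braket{M|M}_{+}=\,_{-}\braket{M|M}_{-}$ is itself nontrivial --- the paper explicitly notes that the relation between the bosonic and fermionic Gaiotto vectors is not known in general (checked in \cite{Ito} only up to level $2$), so invoking ``$G_0$-intertwining'' without an argument does not settle it. And the parameter dictionary $-\epsilon_1\epsilon_2=\hbar$, $\epsilon_1+\epsilon_2=\hbar^{1/2}Q_0$, $a=\tau_0$ cannot be ``fixed'' by low-level checks and then assumed; a proof must show the identity holds at all levels with this dictionary. If your goal was to prove what the paper actually proves in this vicinity, the relevant statement is Proposition~\ref{prop:main} (Gaiotto vector equals super Airy structure partition function), which is established by a short algebraic argument from the super Virasoro constraints $H_iZ_F=F_iZ_F=0$ and the identification of $1$ with the highest-weight state --- a much weaker claim than Conjecture~\ref{conj:AGT}, which remains conjectural both in the paper and in your proposal.
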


\subsection{Gaiotto Vectors from Super Airy Structures}

Conjecture~\ref{conj:AGT} does not impose on how one should represent super Virasoro operators $\{L_n,G_r\}$. In this section, we represent $\{L_n,G_r\}$ as differential operators as in \eqref{L1} and \eqref{G1}, and we prove that the partition function of an appropriate super Airy structure becomes the Gaiotto vector in this representation, after a change of parameters. Before doing so, however, let us show some properties of the partition function of the super Airy structure with the following choice of parameters:
\begin{equation}
\biggl(F,N,\tau_{l},\phi_{kl},\psi_{mn}^F,D_k,Q_l\biggr)=\biggl(F,1,\alpha\delta_{l,0},0,0,\frac{1+2f}{2}T\delta_{k,1},Q\delta_{l,0}\biggr)\label{STRG}
\end{equation}

When we take parameters as above, some properties of the free energy $\mathcal{F}_F$ can be checked explicitly. In particular, we can analyse the $T$-dependence of $\mathcal{F}_F$\footnote{This is analogous to \cite[Lemma 4.5]{BBCC}}: 

\begin{lemma}\label{lem:power}
Let $F_{g,n|2m}(I|J)$ be the coefficients of the free energy $\mathcal{F}_F$ associated with the super Airy structure with the above choice of parameters. Then, they satisfy:
\begin{description}
\item[1]  the $T$-dependence of $F_{g,n|2m}(I|J)$ is factored as
\begin{equation}
F_{g,n|2m}(I|J)=T^{i_1+\cdots i_n+j_1+\cdots j_{2m}+2fm} \tilde{F}_{g,n|2m}(I|J)
\end{equation}
where $\tilde{F}_{g,n|2m}(I|J)$ is independent of $T$.
\item[2] $F_{g,n|2m}(I|J)=0$ for $i_1+\cdots i_n+j_1+\cdots j_{2m}+2fm>g$.
\end{description}
\end{lemma}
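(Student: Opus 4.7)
The plan is to prove both parts by strong induction on $\chi := 2g+n+2m$, using the recursive equations \eqref{BSAS}--\eqref{FSAS} derived in the proof of Theorem~\ref{thm:main}, after first simplifying the Airy structure for this specific parameter choice. The key preliminary observation is that the dressing operator $\Phi_N$ of \eqref{Phi} is the identity: with $N=1$ the negative-mode product in $\Phi_N$ is empty, and the only nonzero positive-mode data are $\tau_0$ and $Q_0=Q$ (which already sit inside $J_0$), while $\tau_l=Q_l=\phi_{kl}=\psi^F_{mn}=0$ for $l\geq 1$ and all $k,m,n$. Hence $H_i = L_i + \hbar\tilde D_i$, $F_i = G_{i+f}$, and \eqref{Di} collapses to $\tilde D_i = \tau_0 D_i = \tfrac{1+2f}{2}\tau_0 T\,\delta_{i,1}$. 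The structure constants \eqref{C0}--\eqref{C3} simplify drastically to $C_k = \tau_0\delta_{k,0}$, $C'_k = Q\delta_{k,0}$, $C_i^{j,k|} = C_i^{j|k} = \delta_{j+k,i}$, and $C_i^{|j,k} = \tfrac12(k-j)\delta_{j+k+2f,i}$, so \eqref{BSAS}--\eqref{FSAS} become transparent Kronecker-delta sums with a nonvanishing $\tau_0$ multiplying the leading coefficient on the LHS, and the only $T$-dependent input to the whole recursion is the seed $F_{1,1|0}(1) = \tfrac{1+2f}{2}T$.

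For Part 1, I would define the $T$-weight $W := i_1+\cdots+i_n + j_1+\cdots+j_{2m} + 2fm$ and verify that the recursions preserve it. The delta $\delta_{k+l,i}$ in the bosonic quadratic terms gives $(k+|I_1|+|J_1|+2fm_1)+(l+|I_2|+|J_2|+2fm_2) = i+|I|+|J|+2fm$ for each product $F_{g_1,\cdot|\cdot}\,F_{g_2,\cdot|\cdot}$; the shifted delta $\delta_{k+l+2f,i}$ absorbs exactly the $+2f$ shift incurred when two fermionic indices are added (so $m\to m+1$ increases $2fm$ by $2f$ while the argument sum decreases by the same amount); and the absorption deltas $\delta_{k-i_l,i}$, $\delta_{k+j_l,i}$ trade one entry for another without changing the sum. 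Combined with the seed of $T$-weight one, induction on $\chi$ yields $F_{g,n|2m}(I|J) = T^W \tilde F_{g,n|2m}(I|J)$ with $\tilde F$ independent of $T$. An equivalent and more conceptual phrasing is to assign weights $w(T)=-1$, $w(x^i)=i$, $w(\theta^j)=j+f$, $w(\hbar)=0$: then $H_i$ and $F_i$ are homogeneous of weights $-i$ and $-(i+f)$, the rescaling $T\mapsto\lambda^{-1}T$, $x^i\mapsto\lambda^i x^i$, $\theta^j\mapsto\lambda^{j+f}\theta^j$ preserves the constraint system, and the uniqueness clause of Theorem~\ref{thm:SAS1} forces $\mathcal{F}_F$ to be invariant under this rescaling, so every monomial has total weight zero.

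For Part 2, I would strengthen the inductive hypothesis to ``$F_{g,n|2m}(I|J) = 0$ whenever $W>g$'' and again induct on $\chi$. The base cases are $F_{1,1|0}(i) = \tfrac{1+2f}{2}T\,\delta_{i,1}$, which vanishes whenever $W=i>1=g$, together with the blanket vanishing $F_{g,n|2m}=0$ for $g<1$ from Remark~\ref{rem:noF0}. For the inductive step on $F_{g,n+1|2m}(i,I|J)$ with $W>g$, every RHS term of \eqref{BSAS} is either a single lower-$\chi$ coefficient $F_{g',\cdot|\cdot}$ of the same $T$-weight $W$ with $g'\leq g$ (so $W>g\geq g'$ triggers the hypothesis), or a product $F_{g_1,\cdot|\cdot}\,F_{g_2,\cdot|\cdot}$ with $W_1+W_2=W$ and $g_1+g_2=g$; from $W>g$ one of $W_i>g_i$ must hold, killing that factor. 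The same argument applies verbatim to \eqref{FSAS}. Since the LHS equals $\tau_0 F_{g,n+1|2m}(i,I|J)$ plus $Q$-dependent multiples of lower-genus coefficients (which themselves vanish by induction, as they share the weight $W$ while having $g'<g$), and since $\tau_0\neq0$, the coefficient in question vanishes.

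The main obstacle I anticipate is controlling the fermionic $2f$-shift in the weight bookkeeping: the constraint $\delta_{k+l+2f,i}$ trades a $+2f$ increment in $2fm$ against a $-2f$ decrement in the argument sum, which is precisely why the $2fm$ correction appears in the definition of $W$, and one must also verify that the $(1+\delta_{f,0}\delta_{j_l,0})^{-1}$ normalization arising from the Ramond zero mode only rescales already-weight-matched terms, hence is benign. Once this bookkeeping is settled, both parts follow from a single joint induction uniformly for $F\in\{NS,R\}$, and the resulting $F_{g,n|2m}$ inherit the same structure for $\hat F_{g,n|2m}$ via Theorem~\ref{thm:main}.
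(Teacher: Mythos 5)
Your proposal is correct and follows essentially the same route as the paper: a single induction on $2g+n+2m$ applied to the specialized recursions \eqref{BSAS}--\eqref{FSAS}, with the Kronecker deltas in \eqref{C1}--\eqref{C3} (after setting $N=1$, $\phi=\psi=0$, $\tau_l=\tau_0\delta_{l,0}$, $Q_l=Q\delta_{l,0}$) forcing each term to carry the same $T$-weight $W=i_1+\cdots+i_n+j_1+\cdots+j_{2m}+2fm$ and to vanish when $W>g$, with the seed $F_{1,1|0}(i)=\tfrac{1+2f}{2}T\delta_{i,1}$ and the $g<1$ vanishing of Remark~\ref{rem:noF0} as base cases, and the purely fermionic coefficients handled separately via \eqref{FSAS}. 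Your supplementary homogeneity/rescaling observation for Part 1 is a valid (and slightly more conceptual) alternative not used in the paper, but the core argument coincides with the paper's proof, just spelled out in more detail.
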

\begin{proof}
We prove by induction in $2g+n+2m\geq3$. Since all we have to do is to look at \eqref{BSAS} and \eqref{FSAS} with the choice of parameters \eqref{STRG}, and since the proof is based on simple computations, we only give a sketch. When $2g+n+2m=3$, one finds from \eqref{BSAS} and \eqref{FSAS} (or from Proposition~\ref{prop:STR}) that
\begin{align}
&\forall i_1\in\mathbb{Z}_{>0}\;\;\;\;\;\;\;\,\,\,\,\,\,\,\,\,\,\,\,F_{1,1|0}(i_1|)=\frac{1+2f}{2}T\delta_{i_1,1},\\
&\forall i_1,i_2\in\mathbb{Z}_{>0}\;\;\;\;\,\,F_{\frac12,2|0}(i_1,i_2|)=0,\\
&\forall j_1,j_2\in\mathbb{Z}_{\geq0}\;\;\;\;F_{\frac12,0|2}(|j_1,j_2)=0.
\end{align}
Thus, it holds when $2g+n+2m=3$. Note that one can easily show $F_{g,n|2m}(I|J)=0$ whenever $g<1$ as discussed in Remark~\ref{rem:noF0}.

Let us now assume the above statements hold for all $g',n',m'$ whenever $3\leq 2g'+n'+2m'\leq2g+n+2m$. Then, $F_{g,n+1|2m}(i_0,I|J)$ can be computed by \eqref{BSAS}, and one notices that due to the Kronecker delta's in \eqref{C1} and \eqref{C2}, each term gives the same power of $T$, namely, $T^{i_0+i_1+\cdots i_n+j_1+\cdots j_{2m}+2fm}$. The Kronecker delta's also guarantee by induction that $F_{g,n+1|2m}(I|J)=0$ for all $i_0+i_1+\cdots i_n+j_1+\cdots j_{2m}+2fm>g$. Note that $F_{g,0|2\tilde{m}}(|j_1,\tilde J)$ with $2g+n+2m<2\tilde m$ cannot be computed by \eqref{BSAS}, hence, we have to check it separately. Nevertheless, similar to the previous case, the Kronecker delta in \eqref{C3} also implies that  each term in \eqref{FSAS} is a monomial in $T$ of degree $T^{j_1+\cdots j_{2\tilde m}+2f \tilde m}$. This completes the proof.
\end{proof}

We now consider a change of parameters from $T$ in \eqref{STRG} to $\Lambda$ in Conjecture~\ref{conj:AGT}. Namely, we consider
\begin{equation}
\Lambda^2=\hbar T.\label{LamtoT}
\end{equation}
Since this modifies the powers of $\hbar$ in each term, let us see how $\mathcal{F}_F(\Lambda^2,\hbar)$ behaves. In particular, we would like to find the leading order in $\hbar$ after the change of parameters.

Thanks to Lemma~\ref{lem:power}, we are able to rewrite the free energy as follows:
\begin{align}
\mathcal{F}_{F}=&\sum_{g,n,m\geq0}^{2g+n+2m>2}\frac{\hbar^{g-1}}{n!(2m)!}\sum_{\substack{i_1,...,i_n>1\\j_1,...,j_{2m}\geq0}}F_{g,n|2m}(I|J)\prod_{k=1}^nx^{i_k}\prod_{l=1}^{2m}\theta^{j_l}\nonumber\\
=&\sum_{h,n,m\geq0}\frac{\hbar^{h-1}}{n!(2m)!}\sum_{\substack{i_1,...,i_n>1\\j_1,...,j_{2m}\geq0}}\Lambda^{2(i_1+\cdots i_n+j_1+\cdots j_{2m}+2fm)}\Phi_{h,n|2m}(I|J)\prod_{k=1}^nx^{i_k}\prod_{l=1}^{2m}\theta^{j_l},\label{powerLambda}
\end{align}
where
\begin{equation}
\Phi_{h,n|2m}(I|J)=F_{h+(i_1+\cdots i_n+j_1+\cdots j_{2m}+2fm),n|2m}(I|J)
\end{equation}
Two important remarks are in order. First, the leading order in $\hbar$ is still of order $\hbar^{-1}$ and $\hbar\,\mathcal{F}_F(\Lambda^2,\hbar)$ after the change of parameters is still a power series in $\hbar^{\frac12}$. Second, $\Phi_{h,n|2m}\neq0$ even for $2g+n+2m<3$ unlike $F_{g,n|2m}$ due to the change of parameters \eqref{LamtoT}.

With this under our belt, we show that the Gaiotto vector in the Neveu-Schwarz or the Ramond sector corresponds to the partition function of the super Airy structure with parameters determined by \eqref{STRG}:

\begin{proposition}\label{prop:main}
Consider the super Airy structure $\tilde{\mathcal{S}}_F$ in Proposition \ref{prop:SAS} with the parameters set by \eqref{STRG}, and let $Z_F$ be the unique partition function of $\tilde{\mathcal{S}}_F$. Then, for $F=NS$, $Z_{NS}$  becomes the Gaiotto vector $\ket{G}$ in the Neveu-Schwarz sector, and for $F=R$, $Z_R$ becomes the bosonic Gaiotto vector $\ket{G}_{R+}$ in the Ramond sector after the change of parameters $\Lambda=\hbar T$.
\end{proposition}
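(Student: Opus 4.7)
The plan is to specialize the super Airy structure operators of Proposition~\ref{prop:SAS} to the parameter choice~\eqref{STRG}, verify that after the change of parameter $\Lambda^2 = \hbar T$ the resulting differential constraints on $Z_F$ coincide with the super-Virasoro constraints defining the Gaiotto vector, and then conclude by uniqueness.

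First I would simplify the operators $H_i$ and $F_i$. Under \eqref{STRG} with $N=1$ the first exponential in \eqref{Phi} has empty index range, and the second is trivial because $\tau_l = \tau_0\delta_{l,0}$, $Q_l = Q_0\delta_{l,0}$, $\phi_{kl} = \psi^F_{mn} = 0$; hence $\Phi_1 = \mathrm{id}$. The dressed $D$-term \eqref{Di} collapses to $\tilde D_1 = \tau_0 D_1 = \tfrac{1+2f}{2}\tau_0 T$ with $\tilde D_i = 0$ for $i \geq 2$, and the quadratic subtraction in \eqref{Hi} also vanishes for $N=1$. The differential constraints $H_i Z_F = F_i Z_F = 0$ therefore reduce to
\begin{equation*}
\bigl(L_1 + \tfrac{1+2f}{2}\hbar\tau_0 T\bigr)Z_F = 0,\qquad L_{i\geq 2}Z_F = 0,\qquad G_{i+f}Z_F = 0 \ \ (i\geq 1),
\end{equation*}
which is already structurally identical to the Gaiotto defining system.

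Next I would realize the Verma module $\mathcal{V}_{\Delta,F}$ as a Fock module via $J_{-k} \mapsto k x^k$ and $\Gamma_{-r} \mapsto \theta^{r-f}$, the extra Grassmann variable $\theta^0$ in the Ramond case encoding the doublet $\ket{\Delta}_\pm$; the super-Virasoro algebra then acts precisely by the differential operators \eqref{L1}--\eqref{G1}, so $Z_F$ lies in a completion of $\mathcal{V}_{\Delta,F}$. After the substitution $\Lambda^2 = \hbar T$, Lemma~\ref{lem:power} ensures that $\hbar\mathcal{F}_F$ is a well-defined formal power series in $\Lambda$: each factor of $T^k$ in $F_{g,n|2m}$ pairs with a factor of $\hbar^k$ from $\hbar^{g-1}$ thanks to the level bound $k\leq g$, so $\hbar^{g-1}T^k = \hbar^{g-1-k}\Lambda^{2k}$ is regular. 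Writing $Z_F = \sum_{M} \Lambda^{2M} v_M$ with $v_M\in\mathcal{V}_{\Delta,F}^M$, the displayed constraints become, level by level, an eigenvalue equation $L_1 v_M \propto v_{M-1}$ together with $L_{n\geq 2}v_M = G_{r\geq 1+f}v_M = 0$, which is exactly the Gaiotto condition \eqref{WhittakerNS}/\eqref{WhittakerR} once the normalization of $\Lambda$ is chosen to absorb the numerical factor $-\tfrac{1+2f}{2}\tau_0$ produced by $H_1$. Uniqueness of the partition function (Theorem~\ref{thm:SAS1}), combined with the manifest level-wise uniqueness of the Gaiotto recursion, then closes the argument.

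The main technical subtlety lies in the Ramond sector, where the extra Grassmann variable $\theta^0$ must be matched consistently with the doublet $\ket{\Delta}_\pm$. I would decompose $Z_R = A(x,\theta) + \theta^0\, B(x,\theta)$ and, using the action of $\Gamma_0 = \tfrac{\theta^0}{2} + \hbar\partial_{\theta^0}$ together with the Ramond highest-weight relation $G_0\ket{\Delta}_\pm = \sqrt{\Delta - c/24}\,\ket{\Delta}_\mp$, extract the component realizing the bosonic Gaiotto vector $\ket{G}_{R+}$; crucially, the super Airy structure carries no $H_0$ or $F_0$ operator, which is precisely what allows the Gaiotto vector to have nonzero $L_0$ and $G_0$ eigenvalues. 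As a by-product, the argument establishes \emph{existence} of the Gaiotto vectors for $\mathcal{N}=1$ superconformal blocks, which was previously only assumed in the physics literature.
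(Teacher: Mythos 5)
Your proposal is correct and follows essentially the same route as the paper: specialize the operators under \eqref{STRG} (noting $\Phi_1=\mathrm{id}$ and $\tilde D_1=\tau_0 D_1$), recast the constraints $H_iZ_F=F_iZ_F=0$ as the super-Virasoro conditions defining the Gaiotto vector, use Lemma~\ref{lem:power} to rewrite $Z_F$ as a power series in $\Lambda$ with leading term $1$ identified with $\ket{\Delta}$ (resp.\ $\ket{\Delta}_+$, the even parity of $Z_R$ singling out the bosonic vector), and conclude by uniqueness. Your explicit tracking of the numerical factor $-\tau_0$ to be absorbed into the normalization of $\Lambda$ is in fact slightly more careful than the paper's shorthand $\Lambda^2=\hbar T$.
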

\begin{proof}
We first consider the Neveu-Schwarz sector. By construction, $Z_{NS}$ satisfies:
\begin{equation}
H_1Z_{NS}=(L_1-\Lambda^2)Z_{NS}=0,\;\;\;\;\forall i\geq1\;\;\;\;H_{i+1}Z_{NS}=L_{i+1}Z_{NS}=0,\;\;\;\;F_iZ_{NS}=G_{i+\frac12}Z_{NS}=0.
\end{equation}
where we used $\Lambda^2=\hbar T$.

Second, \eqref{powerLambda} implies that the free energy $\mathcal{F}_{NS}$ is a power series in $\Lambda$. More precisely,
\begin{equation}
\mathcal{F}_{NS}\in\Lambda^2\,\mathbb{C}\llbracket \Lambda^2\rrbracket.
\end{equation}
This shows that the leading term of the partition function is exactly 1:
\begin{equation}
Z_{NS}=e^{\mathcal{F}_{NS}}=1+\mathcal{O}(\Lambda^2).
\end{equation}

At last, we will need to identify $1$ in this representation with the highest weight vector $\ket{0}=\ket{\Delta}$ in the Neveu-Schwarz sector. This is indeed straightforward to check from \eqref{Heis}, \eqref{Cliff}, \eqref{L1}, and \eqref{G1}. Indeed for the Neveu-Schwarz sector, we have:
\begin{equation}
L_0\cdot1=\Delta\cdot1,\;\;\;\;\forall n,r>0\;\;\;\;L_n\cdot1=G_r\cdot1=0,
\end{equation}
where $\Delta$ coincides with \eqref{weightNS}. Therefore, the proposition holds for the Neveu-Schwarz sector.

Next, we turn to the Ramond sector. Notice that since the partition function of any super Airy structure is necessarily bosonic,  we only consider $\ket{G}_{R+}$. Then, the rest goes parallel to the argument for the Neveu-Schwarz sector. Namely,
\begin{equation}
H_1Z_{R}=\left(L_1-\frac12\Lambda^2\right)Z_{R}=0,\;\;\;\;\forall i\geq1\;\;\;\;H_{i+1}Z_{R}=L_{i+1}Z_{R}=0,\;\;\;\;F_iZ_{R}=G_{i}Z_{R}=0,
\end{equation}
with the identification $\Lambda=\hbar T$. And it can be shown that
\begin{equation}
Z_{R}=e^{\mathcal{F}_R}=1+\mathcal{O}(\Lambda^2),
\end{equation}
hence, we expect $\ket{\Delta}_+=1$. At last, since the representation of $L_n,G_r$ in the Ramond sector is different from the one in the Neveu-Schwarz sector, we find 
\begin{equation}
L_0\cdot1=\Delta\cdot1,\;\;\;\;\forall n,r>0\;\;\;\;L_n\cdot1=G_r\cdot1=0,
\end{equation}
with $\Delta$ given in \eqref{weightR}. This completes the proof.
\end{proof}

Since Theorem~\ref{thm:SAS1} guarantees existence of a unique solution of a super Airy structure, we obtain the following corollary:
\begin{corollary}
The Gaiotto vector in the Neveu-Schwarz sector and the bosonic Gaiotto vector in the Ramond sector exist. And they can be computed by the untwisted or $\mu$-twisted super topological recursion.
\end{corollary}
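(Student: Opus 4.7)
The plan is to chain together the three main results already established in the paper. First, I would invoke Theorem~\ref{thm:SAS1} applied to the super Airy structure $\tilde{\mathcal{S}}_F$ of Proposition~\ref{prop:SAS} with the specific choice of parameters \eqref{STRG}, which immediately yields the existence of a unique partition function $Z_F$ annihilated by $\{H_i, F_i\}_{i\geq 1}$ for each sector $F\in\{NS,R\}$. This is the crucial input, since existence of the Gaiotto vectors was only assumed in the physics literature and is not \emph{a priori} guaranteed at the level of Definitions of the Gaiotto vectors themselves.

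Next, I would apply Proposition~\ref{prop:main}, which identifies $Z_{NS}$ with $\ket{G}_{NS}$ and $Z_R$ with $\ket{G}_{R+}$ after the change of parameters $\Lambda^2 = \hbar T$. Since $Z_F$ exists and satisfies the super Virasoro-type constraints \eqref{SVconstraints}, which under the identification of Proposition~\ref{prop:main} become precisely the defining conditions \eqref{WhittakerNS3} and \eqref{WhittakerR3} of the Gaiotto vectors, existence of $\ket{G}_{NS}$ and $\ket{G}_{R+}$ follows. Note that the bosonic Gaiotto vector $\ket{G}_{R+}$ is the only one accessible this way, because the partition function of a super Airy structure is bosonic by construction; however, as noted after Definition~\ref{def:GaiottoNS}, $\ket{G}_{R+}$ and $\ket{G}_{R-}$ encode the same information, so this is not a loss.

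For the second assertion — computability by the super topological recursion — I would invoke Theorem~\ref{thm:main}, which establishes the dictionary \eqref{F=F} between the coefficients $F_{g,n|2m}$ of the free energy $\mathcal{F}_F$ and the expansion coefficients $\hat F_{g,n|2m}$ of the multilinear differentials $\omega_{g,n|2m}$ obtained from the untwisted or $\mu$-twisted super topological recursion on the super spectral curve $\mathcal{S}_F$ built from the same parameters \eqref{STRG}. Combining this with the explicit formula \eqref{powerLambda} relating $F_{g,n|2m}$ to the coefficients of $\ket{G}_F$ as a $\Lambda$-series, one reads off the Gaiotto vectors directly from the outputs of the super topological recursion of Definition~\ref{def:STR}.

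I do not anticipate a main obstacle: all the heavy lifting (existence and uniqueness of the partition function, the super Virasoro constraints, the matching with Gaiotto data, and the equivalence with the recursive formalism) has been carried out in Theorem~\ref{thm:SAS1}, Proposition~\ref{prop:main}, and Theorem~\ref{thm:main} respectively. The corollary is essentially a two-line assembly of these three statements, and the only care required is to keep track of the change of variables $\Lambda^2 = \hbar T$ when translating between the super Airy structure convention and the Gaiotto vector convention.
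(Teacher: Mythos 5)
Your proposal is correct and follows essentially the same route as the paper: the corollary is obtained by chaining Theorem~\ref{thm:SAS1} (existence and uniqueness of $Z_F$ for the super Airy structure with parameters \eqref{STRG}), Proposition~\ref{prop:main} (identification of $Z_{NS}$ and $Z_R$ with $\ket{G}_{NS}$ and $\ket{G}_{R+}$ under $\Lambda^2=\hbar T$), and Theorem~\ref{thm:main} (the dictionary with the untwisted/$\mu$-twisted super topological recursion). Nothing further is needed.
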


\begin{remark}
The relation between the bosonic and fermionic Gaiotto vectors $\ket{G}_{R\pm}$ in the Ramond sector is highly nontirivial, and the general relation is not known\footnote{\cite{Ito} computed up to level 2.}. As a consequence, existence of the fermionic Gaiotto vector is not supported by the discussions of the present paper. Note, however, that at the zero level, the fermionic highest weight state $\ket{0}_-=\ket{\Delta}_-$ in our representation is given by the extra variable $\theta^0$:
\begin{equation}
\ket{\Delta}_-=\frac{G_0\cdot1}{\sqrt{\Delta-c/24}}=\sqrt{2}\theta^0.
\end{equation}
\end{remark}

\subsubsection{Conjugate Operators}

Let us consider the Hermitian conjugate for the Verma module. This means that
\begin{equation}
\forall i\in\mathbb{Z}_{\neq0}\;\;\;\;(J_i)^{\dagger}=J_{-i},\;\;\;\;\forall r\in\mathbb{Z}+f\;\;\;\;(\Gamma_{r})^{\dagger}=\Gamma_{-r}.
\end{equation}
In order to split the zero mode $\Gamma_0$ into $\theta^0$ and $\hbar\partial_{\theta^0}$, we use the following notation:
\begin{equation}
\tilde \Gamma_0:=\frac12 \theta^0,\;\;\;\;(\tilde \Gamma_0)^{\dagger}:=\hbar\partial_{\theta^{0}},\;\;\;\;\forall i\in\mathbb{Z}_{\neq0}\;\;\;\;\tilde \Gamma_{i+f}=\Gamma_{i+f}
\end{equation}
Let us then rewrite \eqref{powerLambda} in terms of these modes instead of variables as follows:
\begin{align}
\mathcal{F}_F=&\sum_{g,n,m\geq0}^{2g+n+2m\geq3}\frac{\hbar^{g-1}}{n!(2m)!}\sum_{\substack{i_1,...,i_n>1\\j_1,...,j_{2m}\geq0}}F_{g,n|2m}(I|J)\prod_{k=1}^n\frac{J_{-i_k}}{i_k}\prod_{l=1}^{2m}(1+\delta_{j_l+f,0})\tilde \Gamma_{-j_l}\nonumber\\
=&\sum_{h,n,m\geq0}\frac{\hbar^{h-1}}{n!(2m)!}\sum_{\substack{i_1,...,i_n>1\\j_1,...,j_{2m}\geq0}}\Lambda^{2(i_1+\cdots i_n+j_1+\cdots j_{2m}+2fm)}\Phi_{h,n|2m}(I|J)\prod_{k=1}^n\frac{J_{-i_k}}{i_k}\prod_{l=1}^{2m}(1+\delta_{j_l+f,0})\tilde \Gamma_{-j_l},\label{powerLambda1}
\end{align}
where the $(1+\delta_{j_l+f,0})$ is inserted to cancel out the $\frac12$ in the definition of $\tilde\Gamma_0$. Hence, by construction, we find that $\bra{G}$ becomes a differential operator in this representation:
\begin{equation}
(Z_F)^{\dagger}=e^{(\mathcal{F}_F)^{\dagger}}.
\end{equation}
As a consequence, the norm $(\cdot|\cdot)$ is defined by
\begin{equation}
(Z_F|Z_F)=(Z_F)^{\dagger}\cdot Z_F\bigr|_{x=\theta=0}.\label{norm}
\end{equation}

Therefore, Proposition~\ref{prop:main} shows that \eqref{AGTNS} and \eqref{AGTR} in Conjecture~\ref{conj:AGT} are extended by super Airy structures as follows:
\begin{align}
Z_{{\rm Nek}}^{NS}&=\,_{NS}\braket{G|G}_{NS}=(Z_{NS}|Z_{NS}),\\
Z_{{\rm Nek}}^{R}&=\,_{R+}\braket{G|G}_{R+} =\,_{R-}\braket{G|G}_{R-}=(Z_R|Z_R)
\end{align}

\subsubsection{Graphical Interpretation}
\eqref{powerLambda1} suggests that one can compute the free energy $\mathcal{F}_F$ by summing over connected graphs of appropriate weights. For $h,n,m\in\mathbb{Z}_{\geq0}$ with $2h+n+2m\geq1$, let $\gamma_{h,n|2m}(I|J)$ be the connected planar graph of:
\begin{enumerate}
\item an $(n+2m)$-valent vertex which carries a nonnegative integer $h$,
\item  $n$ bosonic edges whose external vertices are labelled clockwise by $I=(i_1,...,i_n)$,
\item $2m$ fermionic edges whose external vertices are labelled clockwise by $J=(j_1,...,j_{2m})$.
\end{enumerate}
See Figure~\ref{fig:graph1} below. We call $h$ the ``number of loops'' for convention, though it is just an integer associated with an $(n+2m)$-valent vertex. We also denote by $\mathbb{G}^{\text{conn}}$ the set of all such connected graphs. That is, every graph in $\mathbb{G}^{\text{conn}}$ is uniquely determined by a nonnegative integer $h$, a set of positive integers $I$, and a set of nonnegative integers $J$ with $2h+n+2m\geq1$.
\begin{figure}[h]
\begin{equation}
\gamma_{h,n|2m}(I|J)=
\vcenter{\hbox{
\begin{tikzpicture}
\draw[color=red, dashed](0,0) -- (-1,0.5);
\draw[color=red, dashed](0,0) -- (-1,-0.5) ;
\node(i1) at (-1.3,0.5) {$i_n$};
\node(dot) at (-1,0.05) {$\vdots$};
\node(in) at (-1.3,-0.5) {$i_1$};
\draw[color=red] (0,0) -- (1,0.5);
\draw[color=red] (0,0) -- (1,-0.5) ;
\node(j1) at (1.3,0.5) {$j_{1}$};
\node(dots) at (1,0.05) {$\vdots$};
\node(j2m) at (1.45,-0.5) {$j_{2m}$};
\node(n1) at (1,0.5) {$\bullet$};
\node(n2) at (1,-0.5) {$\bullet$};
\filldraw[color=red!60, fill=red!5, very thick](0,0) circle (0.35);
\node(0,0) {{\large $h$}};
\filldraw[color=black!100, fill=black!0](-1,0.5) circle (0.08);
\filldraw[color=black!100, fill=black!0](-1,-0.5) circle (0.08);
\end{tikzpicture}}}
\end{equation}
\caption{Pictorial representation of $\gamma_{h,n|2m}(I|J)$. $\circ$ ($\bullet$) denotes bosonic (fermionic) vertices, and dashed (solid) lines are bosonic (fermionic) edges.}\label{fig:graph1}
\end{figure}

From \eqref{powerLambda1}, the weigh $w$ of $\gamma_{h,n|2m}(I|J)$ is defined as
\begin{equation}
w\left(\gamma_{h,n|2m}(I|J)\right)=\hbar^{h-1}\Lambda^{2(i_1+\cdots i_n+j_1+\cdots j_{2m}+2fm)}\Phi_{h,n|2m}(I|J)\prod_{k=1}^n\frac{J_{-i_k}}{i_k}\prod_{l=1}^{2m}(1+\delta_{j_l+f,0})\tilde \Gamma_{-j_l}.\label{weightsred}
\end{equation}
As a note, $h$ appears on the exponent of $\hbar$ and this is why $h$ is called the number of loops. In addition, the symmetry factor $|\cdot|$ of $\gamma_{h,n|2m}(I|J)$ is given, by definition, as
\begin{equation}
|\gamma_{h,n|2m}(I|J)|=n!(2m)!.
\end{equation}
Then, \eqref{powerLambda1} can be written as
\begin{equation}
\frac{\mathcal{F}_F}{\hbar}=\sum_{\gamma\in\mathbb{G}^{\text{conn}}}\frac{w(\gamma)}{|\gamma|}\label{Fdiag}
\end{equation}
Moreover, since the partition function $Z_F$ is the exponential of $\mathcal{F}_F/\hbar$, it is computed by summing over both connected and disconnected graphs which we simply denote by $\mathbb{G}$:
\begin{equation}
Z_F=\sum_{\gamma\in\mathbb{G}}\frac{w(\gamma)}{|\gamma|},\label{Zgraph}
\end{equation}
where the weight of a disconnected graph is defined to be the product of weights of connected components. The symmetry factor of a disconnected graph is defined in a canonical way, that is, it is the product of symmetry factors of connected components times the product of factorials of the multiplicity of each component.

One can repeat the same steps for $(Z_F)^{\dagger}$. Let us denote by $\mathbb{G}'$ be the set of all graphs in $\mathbb{G}$ but with a different colour and with the opposite order of labelling (see Figure~\ref{fig:graph2} below). We similarly define the weight and the symmetry factor of a connected graph $\gamma'_{h,n|2m}(I'|J')\in\mathbb{G}'^{\text{conn}}\subset\mathbb{G}'$ by:
\begin{align}
w\left(\gamma'_{h,n'|2m'}(I'|J')\right)=&\hbar^{h'-1}\Lambda^{2(i'_1+\cdots i'_n+j'_1+\cdots j'_{2m}+2fm)}\Phi_{h',n'|2m'}(I'|J')\nonumber\\
&\;\;\;\;\times\prod_{k=1}^{n'}\frac{J_{-i'_k}^{\dagger}}{i'_k}\prod_{l=0}^{2m'-1}(1+\delta_{j'_{2m'-l}+f,0})\tilde \Gamma_{-j'_{2m'-l}}^{\dagger}\label{weightsblue}
\end{align}
\begin{equation}
|\gamma'_{h',n'|2m'}(I'|J')|=n'!(2m')!
\end{equation}
Note that the order of $\tilde \Gamma_{-j'_l}^{\dagger}$ is the opposite of \eqref{weightsred} due to conjugation. Then similar to \eqref{Zgraph}, $(Z_F)^{\dagger}$ is given by summing over all graphs in $\mathbb{G}'$ with the weights defined above
\begin{equation}
(Z_F)^{\dagger}=\sum_{\gamma'\in\mathbb{G}'}\frac{w(\gamma')}{|\gamma'|}.\label{Z'graph}
\end{equation}

Even though \eqref{Zgraph} and \eqref{Z'graph} are merely a change of notation from \eqref{powerLambda1}, this leads us to a graphical understanding of \eqref{norm}. Namely, the action of $(J_{-i'})^{\dagger},(\Gamma_{-j'})^{\dagger}$ on $J_{-i},\Gamma_{-j}$ is interpreted as connecting two edges of different colours, and the specialisation $x=\theta=0$ (as well as the action of  $(J_{-i'})^{\dagger},(\Gamma_{-j'})^{\dagger}$ on $1$) implies that only closed graphs contribute to $(Z_F|Z_F)$. Let us denote by $\hat{\mathbb{G}}$ the set of all closed graphs given by every possible contraction among graphs in $\mathbb{G}$ and $\mathbb{G}'$ such that contraction is applied only between two edges of different colours. (See Figure~\ref{fig:graph2} below.) That is, a graph in $\hat{\mathbb{G}}$ is uniquely determined by:
\begin{itemize}
\item a set of graphs $\gamma_{h_k,n_k|2m_k}(I_k|J_k)\in\mathbb{G}$ for $k\geq1$ (up to permutation),
\item another set of graphs $\gamma'_{h'_l,n'_l|2m'_l}(I'_l|J'_l)\in\mathbb{G}$ for $l\geq1$  (up to permutation),
\item  the information of how indices in $(I_k,J_k)_{k\geq1}$ are paired up with those in $(I'_l,J'_l)_{l\geq1}$.
\end{itemize}
Note that graphs in $\hat{\mathbb{G}}$ may be connected or disconnected, which depends on how indices are paired up.

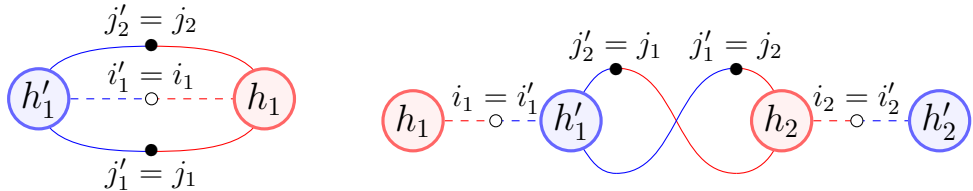
\begin{figure}[h]
\begin{align*}
\vcenter{\hbox{
\begin{tikzpicture}
\draw[color=red, -](0,0.7) to [out=0,in=90] (1.5,0);
\draw[color=red, -](0,-0.7) to [out=0,in=270] (1.5,0);
\draw[color=blue, -](0,0.7) to [out=180,in=90] (-1.5,0);
\draw[color=blue, -](0,-0.7) to [out=180,in=270] (-1.5,0);
\draw[color=blue, dashed](-1.5,0) -- (-0,0);
\draw[color=red, dashed](0,0) -- (1.5,0);
\filldraw[color=red!60, fill=red!5, very thick](1.5,0) circle (0.4);
\node(h1) at (1.5,0) {{\large $h_1$}};
\filldraw[color=blue!60, fill=blue!5, very thick](-1.5,0) circle (0.4);
\node(h2) at (-1.5,0) {{\large $h_1'$}};
\node(dot) at (0,-0.7) {$\bullet$};
\node(dot) at (0,0.7) {$\bullet$};
\filldraw[color=black!100, fill=black!0](0,0) circle (0.08);
\node(i) at (0,0.3) {$i'_1=i_1$};
\node(i) at (0,1.05) {$j'_2=j_2$};
\node(i) at (0,-1.0) {$j'_1=j_1$};
\end{tikzpicture}}}\hspace{10mm}
\vcenter{\hbox{
\begin{tikzpicture}
\draw[color=red, -](-0.8,0.7) to [out=0,in=180] (0.8,-0.7);
\draw[color=blue, -](-0.8,-0.7) to [out=0,in=180] (0.8,0.7);
\draw[color=red, -](1.4,0) to [out=90,in=0] (0.8,0.7);
\draw[color=blue, -](-1.7,0) to [out=0,in=180] (-0.8,0.7);
\draw[color=red, -](1.4,0) to [out=-90,in=0] (0.8,-0.7);
\draw[color=blue, -](-1.7,0) to [out=0,in=180] (-0.8,-0.7);
\draw[color=red, dashed](1.4,0) -- (2.4,0);
\draw[color=blue, dashed](2.4,0) -- (3.5,0);
\draw[color=red, dashed](-3.5,0) -- (-2.4,0);
\draw[color=blue, dashed](-2.4,0) -- (-1.4,0);
\filldraw[color=red!60, fill=red!5, very thick](1.4,0) circle (0.4);
\node(h1) at (1.4,0) {{\large $h_2$}};
\filldraw[color=blue!60, fill=blue!5, very thick](3.5,0) circle (0.4);
\node(h1) at (3.5,0) {{\large $h'_2$}};
\filldraw[color=red!60, fill=red!5, very thick](-3.5,0) circle (0.4);
\node(h1) at (-3.5,0) {{\large $h_1$}};
\filldraw[color=blue!60, fill=blue!5, very thick](-1.4,0) circle (0.4);
\node(h2) at (-1.4,0) {{\large $h'_1$}};
\node(dot1) at (-0.8,0.68) {$\bullet$};
\node(dot2) at (0.8,0.68) {$\bullet$};
\filldraw[color=black!100, fill=black!0](2.4,0) circle (0.08);
\filldraw[color=black!100, fill=black!0](-2.4,0) circle (0.08);
\node(i) at (-2.4,0.3) {$i_1=i'_1$};
\node(i) at (2.4,0.3) {$i_2=i'_2$};
\node(i) at (0.8,1.0) {$j'_1=j_2$};
\node(i) at (-0.8,1.0) {$j'_2=j_1$};
\end{tikzpicture}}}
\end{align*}
\caption{Examples of graphs in $\hat{\mathbb{G}}$. Red graphs are in $\mathbb{G}$ with clockwise labelling, blue ones are in $\mathbb{G}'$ with counter-clockwise labelling. The weight of the second graph should be multiplied by $(-1)$ because the two fermionic edges cross after contraction.}\label{fig:graph2}
\end{figure}

The weight $w$ of a graph $\hat{\gamma}\in\hat{\mathbb{G}}$ is computed as follows. First for each component in $\mathbb{G}$ or $\mathbb{G}'$ we assign \eqref{weightsred} or \eqref{weightsblue}. Next, for bosonic indices $(i\in I,i'\in I')$ in a contraction pair, we replace $(J_{-i'})^{\dagger}J_{-i}$ with $\hbar\, i \delta_{i i'}$, and for fermionic indices $(j\in J,j'\in J')$ in a contraction pair, we replace $(\Gamma_{-j'})^{\dagger}\Gamma_{-j}$ with $\hbar \delta_{jj'}$. We then assign an appropriate sign $(-1)$ whenever two fermionic edges cross after contraction. After all, \eqref{norm} is written as
\begin{equation}
(Z_F|Z_F)=\sum_{\hat{\gamma}\in\hat{\mathbb{G}}}\frac{w(\hat{\gamma})}{|\hat{\gamma}|}.\label{graph1}
\end{equation}

Note that this is not an assumption, but rather a property that one can show. By looking at \eqref{norm} term by term, one finds that the weight of a connected closed graph computed with the above rule is divided by the product of the symmetry factors of each component in $\mathbb{G}$ and $\mathbb{G}'$, and also by the order of permutations of  the components\footnote{This is because graphs in $\hat{\mathbb{G}}$ are defined up to permutations.}, which indeed agrees with the symmetry factor of the connected closed graph. For contributions from a disconnected graph, one finds in \eqref{norm} that the weight is further divided by the product of factorials of the multiplicity of the connected closed components. These factors naturally appear from the expansion of exponentials. Therefore, one arrives at \eqref{graph1}. \footnote{See \cite[Section 4.4]{BBCC} for the graphical interpretation without fermions where the symmetry factor is explicitly written. We omit an explicit formula for the symmetry factor for brevity of notation}.

Finally, if we formally define $\mathcal{F}_{\text{Nek}}^{F}$ for $F\in\{NS,R\}$ by
\begin{align}
(Z_F|Z_F)&=\exp\frac{\mathcal{F}_{\text{Nek}}^{F}}{\hbar},
\end{align}
then, since $\mathcal{F}_{\text{Nek}}^{F}/\hbar$ is the logarithm of $(Z_F|Z_F)$, it can be computed by summing over all \emph{connected} graphs in $\hat{\mathbb{G}}$. Importantly, since every bosonic and fermionic contraction gives $\hbar$, it is clear by counting the power of $\hbar$ in \eqref{weightsred} and \eqref{weightsblue} that the weight of every connected graph in $\hat{\mathbb{G}}$ depends on $\hbar^{h-1}$ for some half-integer $h$. Thus, this implies that $\mathcal{F}_{\text{Nek}}^{F}$ admits a power series expansion in $\hbar^{\frac12}$, as expected from Conjecture~\ref{conj:AGT}.

\subsection{With Matters}
It is discussed in \cite{G} that the Gaiotto vector $\ket{\Delta,\Lambda,m}$ corresponding to $SU(2)$ gauge theory with a single hypermultiplet of mass $m$ is given by
\begin{align}
L_1\ket{\Delta,\Lambda,m}&=-2m\Lambda\ket{\Delta,\Lambda,m},\\
L_2\ket{\Delta,\Lambda,m}&=-\Lambda^2\ket{\Delta,\Lambda,m},\\
\forall n>2,\;\;\;\;L_n\ket{\Delta,\Lambda,m}&=0.
\end{align}
It can be easily shown that there exists a description in terms of Airy structures. This is because $L_1,L_2$ do not appear on the right hand sides of commutation relations for the subalgebra $\{L_1,L_2,L_3,...\}$ so that one can freely add/subtract constant terms by hand without changing their commutation relations as we did in \eqref{Hi}.

In supersymmetric cases, \cite{Ito} conjectures the condition for the Gaiotto vectors $\ket{\Delta,\Lambda,m}_{\pm}^{(s)}$ in the Ramond sector whose norm corresponds to the Nekrasov partition function of $U(2)$ gauge theory on $\mathbb{C}^2/\mathbb{Z}_2$ with a single hypermultiplet of mass $m$ as follows\footnote{To the authors' best knowledge, an analogous vector in the Neveu-Schwarz sector is not discussed in literature.}:

\begin{align}
L_1\ket{\Delta,\Lambda,m}_{\pm}^{(s)}&=-\left(m-\frac{Q}{2}\right)\Lambda\ket{\Delta,\Lambda,m}^{(s)}_{\pm},\\
G_1\ket{\Delta,\Lambda,m}_{\pm}^{(s)}&=c_{\pm}^{(s)}\Lambda\ket{\Delta,\Lambda,m}_{\mp}^{(s)},\label{G1Rm}\\
L_2\ket{\Delta,\Lambda,m}_{\pm}^{(s)}&=-\frac12\Lambda^2\ket{\Delta,\Lambda,m}_{\pm}^{(s)},\label{L2Rm}\\
\forall n>1,\;\;\;\;L_{n+1}\ket{\Delta,\Lambda,m}_{\pm}^{(s)}&=G_{n}\ket{\Delta,\Lambda,m}_{\pm}^{(s)}=0,
\end{align}
where $s\in\{1,2\}$ is an additional labelling of the Gaiotto vectors and
\begin{equation}
c^{(1)}_{\pm}=\frac{\pm1+i}{2},\;\;\;\;c^{(2)}_{\pm}=\frac{\pm1-i}{2}.
\end{equation}
Note that \eqref{L2Rm} is a consequence of \eqref{G1Rm}.

Interestingly, however, it is not easy to figure out how to describe these Gaiotto vectors in terms of super Airy structures. This is because $G_1\ket{\Delta,\Lambda,m}_{\pm}^{(s)}\neq0$ and the relation between $\ket{\Delta,\Lambda,m}_{+}^{(s)}$ and $\ket{\Delta,\Lambda,m}_{-}^{(s)}$ is nontrivial and not known. In order to apply the framework of super Airy structures, a necessary condition is to find a fermionic operator $\hat G_1$ with the following property:
 \begin{equation}
\hat G_1 \ket{\Delta,\Lambda,m}_{+}^{(s)}=0\;\;\Leftrightarrow\;\;G_1\ket{\Delta,\Lambda,m}_{+}^{(s)}=c_{+}^{(s)}\Lambda\ket{\Delta,\Lambda,m}_{-}^{(s)}.
\end{equation}
Even if one managed to find such $\hat G_1$, then one would still need to check whether there exists a set of differential operators including $\hat G_1$ that fits to Definition~\ref{def:SAS}.

\section{Conclusion}\label{sec:Conclusion}

We have proposed the notion of untwisted/$\mu$-twisted super spectral curves as well as abstract super loop equations. Then, we showed that new recursive formalisms, which we call the ``untwisted/$\mu$-twisted super topological recursion'', uniquely solve the untwisted/$\mu$-twisted super loop equations. We note that these new recursions are variants of the $\mathcal{N}=1$ super topological recursion of \cite{BO2}, which would be called the $\rho$-twisted super topological recursion in our notation. As noted in Section~\ref{sec:STR}, the difference between untwisted and $\mu$-twisted super spectral curves resembles the difference between Neveu-Schwarz punctures and Ramond punctures in the context of super Riemann surfaces. We further showed an alternative way of solving abstract super loop equations in terms of super Airy structures as untwisted/$\mu$-twisted modules of the $\mathcal{N}=1$ super Virasoro algebra. We then proved an equivalence between these super Airy structures and the untwisted/$\mu$-twisted super topological recursion,  which is summarised in Theorem~\ref{thm:main}. Therefore, we have mathematically formalised the flowchart in Figure~\ref{fig:goal}.

We then applied these new recursions to computations of Gaiotto vectors for superconformal blocks. We showed in Proposition~\ref{prop:main} that the partition function of an appropriate super Airy structure coincides with the corresponding Gaiotto vector for superconformal blocks. Importantly, since the uniqueness and existence of the partition function of a super Airy structure is mathematically proven, Proposition~\ref{prop:main} serves as a proof of the uniqueness and existence of the Gaiotto vectors for superconformal blocks. Thanks to a conjectural extension of the AGT correspondence \cite{BF,Ito}, we notice that the super topological recursion have access to $\mathcal{N}=2$ pure $U(2)$ gauge theory on $\mathbb{C}^2/\mathbb{Z}_2$. In addition to applications discussed in \cite{BO2} such as supereigenvalue models \cite{BO,C,O}, our results provide another piece of evidence about how useful super Airy structures and the super topological recursion are.

There are a number of future directions one can take to generalise results of the present paper. For example, it is interesting to consider super Airy structures as modules of higher rank supersymmetric algebras such as $\mathcal{W}(\mathfrak{osp}(n|m))$-algebras. We expect that corresponding two-dimensional conformal field theory would be parafermion theory, and gauge theory counterpart would be $U(N)$ theory on $\mathbb{C}^2/\mathbb{Z}_p$ with an appropriate choice of $m,n,N,p$. See \cite{Manabe} and references therein for details about parafermion theory. Another aspect is to consider the Gaiotto vector in the Neveu-Schwarz sector with mass parameters. Unlike the Ramond sector discussed in \cite{Ito}, it is expected that the generalisation of \cite{G} is possible in the Neveu-Schwarz sector because both $L_1$ and $L_2$ can be shifted by constant terms without changing commutation relations. Furthermore, all the applications discussed in the present paper are when $N=1$ as in \eqref{STRG}. Since the super topological recursion stands for any higher $N$,  it is interesting to see if there are any examples in physics or mathematics with $N>1$. Finally, as shown in Table~\ref{table:1} there are four types of super Airy structures as modules of the $\mathcal{N}=1$ super Virasoro algebra. On the other hand, the geometric counterpart, i.e., the super topological recursion is reported only for three of them ($\rho$-twisted one by \cite{BO2}, and untwisted/$\mu$-twisted ones in the present paper), and we hope to introduce the last $\sigma$-twisted super topological recursion in the future with some interesting applications.

\newpage
\appendix
\section{Computational Details}\label{sec:App}

\subsection{Computations for the proof of Proposition~\ref{prop:SAS}}\label{app:SAS}

We show that there exists a linear transformation that brings  $H_{i} ,F_{i}$ to the form of \eqref{form}. Note that $\Phi_N$ acts on Heisenberg modes $(J_a)_{a\in\mathbb{Z}}$ and $(\Gamma_r)_{r\in\mathbb{Z}+f}$ as:
\begin{align}
\Phi_N J_0 \Phi_N^{-1}=&J_0,\\
\forall a\in\mathbb{Z}_{\neq0}\;\;\;\;\;\;\;\Phi_N J_{-a} \Phi_N^{-1}=&J_{-a}+\sum_{b\geq1}\frac{\phi_{ab}}{b}J_b\nonumber\\
&+\tau_{a}+\hbar^{\frac12}Q_{a}+\sum_{b=1}^{N-1}\frac{(\tau_{-b}+\hbar^{\frac12}Q_{-b})\phi_{ab}}{b},\label{PhiJ}\\
\forall r\in\mathbb{Z}+f\;\;\;\;\Phi_N \Gamma_{-r}\Phi_N^{-1}=&\Gamma_{-r}+\sum_{s\in\mathbb{Z}_{\geq0}}\psi_{r-f, s}\Gamma_{s+f},\label{PhiGamma}
\end{align}
where we conventionally defined $\phi_{0,k}=0$, and $\tau_{l-N+1}=Q_{l-N+1}=\phi_{l,k}=\psi_{l,k}=0$ for $l\in\mathbb{Z}_{<0}$. Using \eqref{PhiJ} and \eqref{PhiGamma}, one can explicitly write $H_{i} ,F_{i}$ as

\begin{align}
H_i&=\sum_{k\in\mathbb{Z}_{\geq0}} C_k J_{i+k}+\hbar^{\frac12}\sum_{k\in\mathbb{Z}_{\geq0}} C'_k J_{i+k}-\frac{N+i}{2}Q\hbar^{\frac12}J_{i+N-1}\nonumber\\
&\;\;\;\;+\frac12\sum_{j,k\in\mathbb{Z}_{\neq0}}C_i^{j,k|}:J_jJ_k:+\frac12\sum_{i,j\in\mathbb{Z}}C_i^{|j,k}:\Gamma_{j+f}\Gamma_{k+f}:+\hbar \tilde D_i\delta_{i\leq N},\\
F_i&=\sum_{k\in\mathbb{Z}_{\geq0}} C_k \Gamma_{i+k+f}+\hbar^{\frac12}\sum_{k\in\mathbb{Z}_{\geq0}} C'_k \Gamma_{i+k+f}\nonumber\\
&-\left(N+i-\frac{1-2f}{2}\right)Q\hbar^{\frac12}\Gamma_{N+i+f-1}+\sum_{j,k\in\mathbb{Z}}C_i^{j|k}:J_j\Gamma_{k+f}:,
\end{align}
where
\begin{align}
C_k=&\tau_{k-N+1}+\sum_{p=1}^{N-1}\frac{\tau_{-p}\phi_{p, k-N+1}}{p}\label{C0},\\
C'_k=&Q_{k-N+1}+\sum_{p=1}^{N-1}\frac{Q_{-p}\phi_{p, k-N+1}}{p}\label{C00},\\
C_i^{j,k|}=&\delta_{j+k,N+i-1}+\frac{\phi_{j,k-N-i+1}}{j}+\frac{\phi_{j-N-i+1,k}}{k},\label{C1}\\
C_i^{|j,k}=&\frac12\Bigl((k-j)\delta_{j+k+2f,i+N-1}+(2k+2f-N-i+1)\psi_{j,k-N-i+1}\nonumber\\
&-(2j+2f-N-i+1)\psi_{k,j-N-i+1}\Bigr),\label{C2}\\
C_i^{j|k}=&\delta_{j+k,N+i-1}+\frac{\phi_{j,k-N-i+1}}{j}+\psi_{k,j-N-i+1-2f},\label{C3}
\end{align}
Recall that $\tau_{-(N-1)}\neq0$, which implies $C_0\neq0$. Then from the degree 1 terms in $H_i$ and $F_i$, one notices that there exists an (infinite dimensional) upper triangular matrix that takes $H_i,F_i$ to $\bar H_i, \bar F_i$ of the form of \eqref{form}, that is,
\begin{equation}
\bar H_i=J_i+\hbar D_i+\text{deg. 2 terms},\;\;\;\;\bar F_i=\Gamma_i+\text{deg. 2 terms}.\label{diagonal}
\end{equation}
It is important that there is only one $D_i$ for each $i$ in $\bar H_i$. This is exactly why we define $\tilde D_i$ by \eqref{Di}. Therefore, $\{H_{i} ,F_{i}\}_{ i\in\mathbb{Z}_{\geq1}}$ and $\{\bar H_{i} ,\bar F_{i}\}_{ i\in\mathbb{Z}_{\geq1}}$ are related by a linear transformation, and this proves that $\tilde{\mathcal{S}}_F$ forms a super Airy structure.

\subsection{Computations for the proof of Theorem~\ref{thm:main}}\label{app:thm}

We first consider the differential constraints given by the operators $\bar H_i,\bar F_i$ defined in \eqref{diagonal}. Then for $(g,n,m)=(1,1,0)$, it is easy to see that $\bar H_ie^{\mathcal{F}}=\bar F_ie^{\mathcal{F}}=0$ gives
\begin{equation}
F_{1,1|0}(i)=D_i.\label{F11D}
\end{equation}
See, for example, \cite[Theorem 2.20]{SAS} for justifying this consequence. On the other hand, $\omega_{1,1|0}$ is a part of the defining data of the super spectral curve. Thus, \eqref{thm} holds for $(g,n,m)=(1,1,0)$. Note that for $(g,n,m)=(0,3,0)$ and $(g,n,m)=(0,1,2)$, one can also show that $F_{0,3|0}=F_{0,1|2}=0$.

For any other $(g,n,m)$ with $2g+n+2m-2>0$, the strategy is the same as the proof given in \cite[Appendix A.3]{BO2}. It turns out that it is more convenient to consider the constraints coming from $H_ie^{\mathcal{F}}= F_ie^{\mathcal{F}}=0$ in order to match with the super topological recursion (see Footnote~\ref{footnote:equiv}). Let us denote by $I=\{i_1,i_2,...\}$ a collections of positive integers and by $J=\{j_1,j_2,...\}$ by a collection of nonnegative integers \footnote{In the previous section, we denoted by $I,J$ a collection of bosonic and fermionic variables. We abuse the notation here, but one should be able to decode whether they are collections of indices or variables from the context.}. Then, we introduce the following quantities: 
\begin{align}
\Xi_{g,n+1|2m}[i,I|J]&=\sum_{k\geq0}C_kF_{g,n+1|2m}(i+k,I|J)+\sum_{k\geq0}C'_kF_{g-\frac12,n|2m}(i+k,I|J)\nonumber\\
&\;\;\;\;-\frac12 Q_0(i+N)F_{g-\frac12,n+1|2m}(i+N-1,I|J),\\
\Xi_{g,n|2m}[I|i,J]&=\sum_{k\geq0}C_kF_{g,n|2m}(I|k+i,J)+\sum_{k\geq0}C'_kF_{g-\frac12,n|2m}(I|k+i,J)\nonumber\\
&\;\;\;\;-Q_0(N+i-\frac{1-2f}{2})F_{g-\frac12,n|2m}(I|i+N-1,J),\\
\Xi_{g,n|2m}[k,l,I|J]&=F_{g-1,n+2|2m}(k,l,I|J)\nonumber\\
&\;\;\;\;+\sum_{g_1+g_2=g}\sum_{\substack{I_1\cup I_2=I \\ J_1\cup J_2=J}}(-1)^{\rho}F_{g_1,n_1+1|2m_1}(k,I_1|J_1)F_{g_2,n_2+1|2m_2}(l,I_2|J_2),\\
\Xi_{g,n|2m}[I|k,l,J]&=-F_{g-1,n|2m+2}(I|k,l,J)\nonumber\\
&\;\;\;\;+\sum_{g_1+g_2=g}\sum_{\substack{I_1\cup I_2=I \\ J_1\cup J_2=J}}(-1)^{\rho}F_{g_1,n_1|2m_1}(I_1|k,J_1)F_{g_2,n_2|2m_2}(I_2|l,J_2),
\end{align}
\begin{align}
\Xi_{g,n|2m}[k,I|l,J]&=F_{g-1,n+1|2m}(k,I|l,J)\nonumber\\
&\;\;\;\;+\sum_{g_1+g_2=g}\sum_{\substack{I_1\cup I_2=I \\ J_1\cup J_2=J}}(-1)^{\rho}F_{g_1,n_1+1|2m_1}(k,I_1|J_1)F_{g_2,n_2|2m_2}(I_2|l,J_2).
\end{align}
Then order by order in $\hbar$ as well as in variables $x^j,\theta^j$, we find from $H_i^2Z=0$ a sequence of constraints on the free energy $F_{g,n+1|2m}$ for $2g+n+2m-2>0$ with $(g,n,m)\neq(1,0,0)$ as follows: 
\begin{align}
0=&\,\Xi_{g,n+1|2m}[i,I|J]+\sum_{k,l\geq0}\left(C_i^{k,l|}\Xi_{g,n|2m}^{(2)}[k,l,I|J]+C_i^{|k,l}\Xi_{g,n|2m}^{(2)}[I|k,l,J]\right)\nonumber\\
&+\sum_{k\geq0}\left(\sum_{l=1}^ni_lC_i^{-i_l,k|}F_{g,n|2m}(k,I\backslash i_l|J)+\sum_{l=1}^{2m}(-1)^{l-1}\frac{C_i^{|-j_l-2f,k}}{{1+\delta_{f,0}\delta_{j_l,0}}}F_{g,n|2m}(I|k,J\backslash j_l)\right),\label{BSAS}
\end{align}
where the $\delta_{f,0}\delta_{j_l,0}$ in the last term is a consequence of the fermionic zero mode $\Gamma_0$. Similarly,  $F_i^2Z=0$ gives a sequence of constraints for $F_{g,n|2m}$ for $2g+n+2m-2>1$ with $m\geq1$
\begin{align}
0=&\,\Xi_{g,n|2m}[I|i,J]+\sum_{k,l\geq0}C_i^{k|l}\Xi_{g,n|2m}^{(2)}[k,I|l,J]\nonumber\\
&+\sum_{k\geq0}\left(\sum_{l=1}^ni_lC_i^{-i_l|k}F_{g,n-1|2m}(I\backslash i_l|k,J)+\sum_{l=1}^{2m-1}(-1)^{l-1}\frac{C_i^{k|-j_l-2f}}{1+\delta_{f,0}\delta_{j_l,0}}F_{g,n+1|2m-2}(k,I|J\backslash j_l)\right).\label{FSAS}
\end{align}

We now will show that exactly the same equations can be derived for $\hat F_{g,n|2m}$ for $2g+n+2m-2>0$ with $(g,n,m)\neq(1,1,0)$ from the abstract super loop equations. The abstract loop equations imply that
\begin{align}
\forall i\in\mathbb{Z}_{\geq1},\;\;\;\;0&=\underset{z=0}{\text{Res}}\,\frac{z^{i+N}}{dz}\left(\mathcal{Q}_{g,n+1|2m}^{BB}(z,I|J)+\mathcal{Q}_{g,n+1|2m}^{FF}(z,I|J)\right).\label{QB2},\\
0&=\underset{z=0}{\text{Res}}\,\frac{z^{i+N+2f-1}\Theta^{F}_z}{dz}\left(\mathcal{Q}_{g,n+1|2m}^{BF}(z,I|J)\right),\label{QF2}
\end{align}
where the extra power of $z^{2f-1}$ is inserted because $(\Theta_z^R)^2=zdz$ whereas $(\Theta_z^{NS})^2=dz$.

Let us compute terms that involve $\omega_{0,1|0}$ in \eqref{QB2}. Since $\omega_{g,n|2m}$ respects polarization by definition, we find that
\begin{align}
&\underset{z=0}{\text{Res}}\,\frac{z^{i+N}}{dz}\omega_{0,1|0}(z|)\omega_{g,n+1|2m}(z,I|J)\nonumber\\
&=\sum_{l>-(N-1)}\sum_{i_0>0}\underset{z=0}{\text{Res}}\,z^{i+N}\left(z^{l-1}+\sum_{p>0}\frac{\phi_{l p}}{l}z^{p-1}\right)\left(z^{-i_0-1}+\sum_{q>0}\frac{\phi_{i_0 q}}{i_0}z^{q-1}\right)dz\nonumber\\
&\hspace{10mm}\times \tau_l\hat F_{g,n+1|2m}(i_0,I|J)\bigotimes_{k=1}^nd\xi_{-i_k}(z_k)\bigotimes_{l=1}^{2m}\eta_{-j_l}(u_l,\theta_l)\nonumber\\
&=\sum_{k\geq0}C_kF_{g,n+1|2m}(i+k,I|J)\bigotimes_{k=1}^nd\xi_{-i_k}(z_k)\bigotimes_{l=1}^{2m}\eta_{-j_l}(u_l,\theta_l),\label{CkF}
\end{align}
where we used that $\phi_{kl}=0$ for any $l\leq0$ and $C_k$ agrees with \eqref{C0}. Note that $C_k$ depends on $\phi_{kl}$ due to nonzero $\tau_{l<1}$ unlike (A.37) in \cite[Appendix A.2]{BO2}. Next, terms with $\omega_{\frac12,1|0}$ in \eqref{QB2} are
\begin{align}
&\underset{z=0}{\text{Res}}\,\frac{z^{i+N}}{dz}\omega_{\frac12,1|0}(z|)\omega_{g-\frac12,n+1|2m}(z,I|J)\nonumber\\
&=\sum_{k\geq0}C'_kF_{g-\frac12,n+1|2m}(i+k,I|J)\bigotimes_{k=1}^nd\xi_{-i_k}(z_k)\bigotimes_{l=1}^{2m}\eta_{-j_l}(u_l,\theta_l).\label{CkF2}
\end{align}
Also, the $Q_0$-dependent terms in \eqref{QB2} give
\begin{align}
&\frac12\left(\underset{\tilde z\rightarrow0}{{\rm Res}}\,\omega_{\frac12,1|0}(\tilde z)\right)\mathcal{D}_z\cdot\omega_{g-\frac12,n+1|2m}(z,I|J)\nonumber\\
&=-\frac12Q_0(i+N)(F_{g-\frac12,n+1|2m}(N+i-1,I|J)\bigotimes_{k=1}^nd\xi_{-i_k}(z_k)\bigotimes_{l=1}^{2m}\eta_{-j_l}(u_l,\theta_l).\label{CkF3}
\end{align}
The sum of \eqref{CkF}, \eqref{CkF2}, and \eqref{CkF3} precisely agrees with the first term in \eqref{BSAS}, i.e. $\Xi_{g,n+1|2m}[i,I|J]$ when we drop the $\bigotimes d\xi_I\otimes\bigotimes\eta_J$ factor.

Similarly, terms involving $\omega_{0,1|0}$ $\omega_{\frac12,1|0}$, and the $Q_0$-dependent terms in \eqref{QF2} are respectively computed as
\begin{align}
&\underset{z=0}{\text{Res}}\,\frac{z^{i+N+2f-1}\Theta^{F}_z}{dz}\omega_{0,1|0}(z|)\omega_{g,n|2m}(I|z,J)\nonumber\\
&=\sum_{k\geq0}C_kF_{g,n|2m}(I|i+k,J)\bigotimes_{k=1}^nd\xi_{-i_k}(z_k)\bigotimes_{l=2}^{2m}\eta_{-j_l}(u_l,\theta_l),\label{CkF4}
\end{align}
\begin{align}
&\underset{z=0}{\text{Res}}\,\frac{z^{i+N+2f-1}\Theta^{F}_z}{dz}\omega_{\frac12,1|0}(z|)\omega_{g-\frac12,n|2m}(I|z,J)\nonumber\\
&=\sum_{k\geq0}C'_kF_{g-\frac12,n|2m}(I|i+k,J)\bigotimes_{k=1}^nd\xi_{-i_k}(z_k)\bigotimes_{l=2}^{2m}\eta_{-j_l}(u_l,\theta_l).\label{CkF5}
\end{align}
\begin{align}
&\underset{z=0}{\text{Res}}\,\frac{z^{i+N+2f-1}\Theta^{F}_z}{dz}\left(\underset{\tilde z\rightarrow0}{{\rm Res}}\,\omega_{\frac12,1|0}(\tilde z)\right)\left(\mathcal{D}_z\cdot\omega_{g-\frac12,n|2m}(I|z,J)+\frac{1-2f}{2}d\xi_0(z)\omega_{g-\frac12,n|2m}(I|z,J)\right)\nonumber\\
&=-Q_0(i+N+\frac{1-2f}{2})(F_{g-\frac12,n+1|2m}(N+i-1,I|J)\bigotimes_{k=1}^nd\xi_{-i_k}(z_k)\bigotimes_{l=2}^{2m}\eta_{-j_l}(u_l,\theta_l).\label{CkF6}
\end{align}
The sum of \eqref{CkF4}, \eqref{CkF5}, and \eqref{CkF6} precisely agrees with the first term in \eqref{FSAS}, i.e., $\Xi_{g,n|2m}[I|i,J]$ when we drop the $\bigotimes d\xi_I\otimes\bigotimes\eta_J$ factor.

Computations for the rest of the terms are completely parallel to those in \cite[Appendix A.2]{BO2}, but here are even simpler thanks to the absence of the involution operator $\sigma$. Thus, we omit tedious yet trivial computations, and refer to the reader \cite{BO2}. As a computational note, the difference between $(\Theta_z^R)^2=zdz$ and $(\Theta_z^{NS})^2=dz$ should be taken carefully. After all, one finds that $\hat{F}_{g,n|2m}$ satisfy precisely the same set of equations as the one that $F_{g,n|2m}$ do, i.e., \eqref{BSAS} and \eqref{FSAS}. Since uniqueness of solution is clear, $\hat{F}_{g,n|2m}=F_{g,n|2m}$. This proves Theorem~\ref{thm:main}.

\newpage

\end{document}